\documentclass[conference]{IEEEtran}
\IEEEoverridecommandlockouts
% The preceding line is only needed to identify funding in the first footnote. If that is unneeded, please comment it out.
\usepackage{cite}
\usepackage{amsmath,amssymb,amsfonts}
\usepackage{algorithm}
\usepackage{algorithmicx}
\usepackage[noend]{algpseudocode}
\usepackage{graphicx}
\usepackage{subfigure}
\usepackage{amsthm}
\newtheorem{theorem}{Theorem}

\usepackage{hyperref}
\usepackage{textcomp}
\usepackage{comment}
\usepackage[frozencache,cachedir=.]{minted}
\usepackage{xcolor}
\def\BibTeX{{\rm B\kern-.05em{\sc i\kern-.025em b}\kern-.08em
    T\kern-.1667em\lower.7ex\hbox{E}\kern-.125emX}}
\begin{document}

\title{Minerva: Decentralized Collaborative Query Processing over InterPlanetary File System
\thanks{Code of Minerva is available:
\href{https://github.com/Fudan-MediaNET/Minerva}{https://github.com/Fudan-MediaNET/Minerva}}
}

\author{Zhiyi Yao, Bowen Ding, Qianlan Bai, Yuedong Xu% <-this % stops a space
\IEEEcompsocitemizethanks{\IEEEcompsocthanksitem Zhiyi Yao, Bowen Ding and Yuedong Xu are with School of Information Science and Technology, Fudan University. E-mail: \{ydxu\}@fudan.edu.cn
}% <-this % stops an unwanted space
}

\begin{comment}

\author{\IEEEauthorblockN{1\textsuperscript{st} Given Name Surname}
\IEEEauthorblockA{\textit{dept. name of organization (of Aff.)} \\
\textit{name of organization (of Aff.)}\\
City, Country \\
email address or ORCID}
\and
\IEEEauthorblockN{2\textsuperscript{nd} Given Name Surname}
\IEEEauthorblockA{\textit{dept. name of organization (of Aff.)} \\
\textit{name of organization (of Aff.)}\\
City, Country \\
email address or ORCID}
\and
\IEEEauthorblockN{3\textsuperscript{rd} Given Name Surname}
\IEEEauthorblockA{\textit{dept. name of organization (of Aff.)} \\
\textit{name of organization (of Aff.)}\\
City, Country \\
email address or ORCID}
\and
\IEEEauthorblockN{4\textsuperscript{th} Given Name Surname}
\IEEEauthorblockA{\textit{dept. name of organization (of Aff.)} \\
\textit{name of organization (of Aff.)}\\
City, Country \\
email address or ORCID}
\and
\IEEEauthorblockN{5\textsuperscript{th} Given Name Surname}
\IEEEauthorblockA{\textit{dept. name of organization (of Aff.)} \\
\textit{name of organization (of Aff.)}\\
City, Country \\
email address or ORCID}
\and
\IEEEauthorblockN{6\textsuperscript{th} Given Name Surname}
\IEEEauthorblockA{\textit{dept. name of organization (of Aff.)} \\
\textit{name of organization (of Aff.)}\\
City, Country \\
email address or ORCID}
}
\end{comment}

\maketitle

\begin{abstract} Data silos create barriers in accessing and utilizing data dispersed over networks. Directly sharing data easily suffers from the long downloading time, the single point failure and the untraceable data usage. In this paper, we present Minerva, a peer-to-peer cross-cluster data query system based on InterPlanetary File System (IPFS). Minerva makes use of the distributed Hash table (DHT) lookup to pinpoint the locations that store content chunks. We theoretically model the DHT query delay and introduce the fat Merkle tree structure as well as the DHT caching to reduce it. We design the query plan for read and write operations on top of Apache Drill that enables the collaborative query with decentralized workers. 
We conduct comprehensive experiments on Minerva, and the results show that Minerva achieves up to $2.08 \times$ query performance acceleration compared to the original IPFS data query, and could complete data analysis queries on the Internet-like environments within an average latency of $0.615$ second. With collaborative query, Minerva could perform up to $1.39 \times$ performance acceleration than centralized query with raw data shipment.

%Minerva can efficiently perform data queries (less than 700ms) in cluster and internet environments, and improve query efficiency through collaborative query. 

%adopt a fat Merkle tree structure to organize the Hashes of c

%With the improvement of data analysis methods and efficiency, data has gradually become an important resource which promoting social production. However, due to issues such as data privacy and physical storage, the phenomenon of data silos hinders the full utilization and analysis of data. In this paper, we propose Minerva, a cross-cluster data query system based on InterPlanetary File System (IPFS). Minerva can perform standard queries and analysis of data in IPFS networks, and complete peer-to-peer federated data collaborative queries in data federation. In order to improve the efficiency of data queries in federated data analysis scenarios, we have theoretically modeled the system's data queries and proposed improved query schemes such as fat Merkle tree structure and cache system. We conducted experimental evaluations on Minerva, and the results showed that Minerva can efficiently perform data queries (less than 700ms) in cluster and internet environments, and improve query efficiency through collaborative query.
\end{abstract}

\begin{IEEEkeywords}
IPFS, Decentralized Query, Distributed Hash Table, Merkle Tree
\end{IEEEkeywords}

\section{Introduction}
Efficient query and analysis of large data can provide accurate data-driven solutions to a wide range of services and thereby improve productivity. With the development of distributed database and cloud object storage, large-scale data is often stored in private cloud clusters or database. Although this data storage mode ensures the privacy, location and access efficiency of data, it also causes the data isolation problem. Especially for transportation\cite{zheng2010geolife}, medical and academic\cite{article}, the data of various organizations and companies is often stored in a decentralized manner, which makes it difficult to take full advantage of joint big data analysis. 

To alleviate the problem of data silo\cite{10.1145/3429252}, existing methods provide solutions from the perspective of federated queries\cite{FA2022}. Apache Presto\cite{inproceedings} is a distributed query engine that supports different database backend. Presto query data from heterogeneous data sources and accesses data from different data federations\cite{inproceedingspre}. PXF\cite{Raghavan2019PlatformEF} is a query framework for heterogeneous data sources, which realizes the joint query of heterogeneous data sources, and uniformly processes data by defining remote data tables in the form of PXF. Where there is a high requirement for data privacy, secure multi-party computation (SMC)\cite{inproceedingsSMC} can provide secure federated data query and data federation integration. Although all the above methods provide convenience for data sharing and joint query, there are also some problems that prevents practice. These methods either require high speed network to centralize data processing, or require data intermediate structure to process data in a unified form, which increases the cost of data transformation and computation. 

In this paper, we propose Minerva, a new decentralized solution for data federation query and joint open data sharing platform. The core idea of Minerva is to use the file system over the network as the storage of federated data, query and compute the back-end for cross network and cross clusters federated data query. This simple but effective idea has brought many advantages: 1) The unified network file system is used for query, simplifying the cumbersome configuration caused by complex data storage distribution (like multiple data sources or clusters), and providing the possibility of cross cluster data access. 2) Data access is associated with the network, and multiple data federations can be established by building a virtual private network. 3) Data storage nodes are used for distributed query, making full use of computing resources and reducing network transmission load. Through these advantages, multi-party can easily establish a federated data query platform and use shared data for advanced.

To achieve cross cluster data analysis sharing and fine-grained efficient federated queries in the internet, we choose the Interplanetary Filesystem (IPFS)\cite{DBLP:journals/corr/Benet14} as the storage back-end of Minerva system. IPFS is a blockchain based, permanent and decentralized network file system, as well as a point-to-point distributed protocol. IPFS can locate the nodes of data distribution within the logarithmic time complexity. IPFS file version control and file anonymity also guarantee data security. For high-performance SQL parsing and distributed result aggregation, we use Apache drill\cite{articleDRILL} as the SQL parser of Minerva.

The contributions and main work of this paper are summarized as follows:

\begin{itemize}
    \item We have built a cross-network and cross-cluster distributed data query system, which can both serve the shared open data joint query and distributed query of private data. To the best of our knowledge, Minerva is the first system to use native SQL language to conduct decentralized data query and using the network as the storage backend.
    % a system with some new feature
    \item Minerva is a high-performance, scalable system to latency. It uses native SQL language to jointly query network data in a cloud independent environment. Minerva can support queries in various data formats, such as csv, json, etc., and track data usage in real time.
    % the system with high performance
    \item We design a decentralized data query mechanism based on IPFS, and transmit compute instead of the original data between nodes, which improve the efficiency of distributed query. We build theoretical model of transmission and computation latency distribution.
    \item We improve the decentralized file index structure suitable for big data queries and proposed our file parsing method suitable for decentralized and DHT-based storage. We demonstrate the effectiveness of our decentralized file parsing by theory and experiments. 
    \item We conduct experiments to evaluate the performance of Minerva. Evaluation shows in distributed network queries, Minerva can achieve high query performance in the WAN, and also has high query planning capability in the local cluster (with latency less than $456$ ms and $615$ ms with $100 Mpbs$ network). the results show that Minerva achieves up to $2.08 \times$ query performance acceleration compared to the original IPFS data query. With collaborative query, Minerva could perform up to $1.39 \times$ performance acceleration than centralized query with raw data shipment.
\end{itemize}

%We use Minerva as the data management backend in the real world, and built an open data platform \emph{datahub}, which is available:
%\href{http://www.datahub.pub}{http://www.datahub.pub}.

In the rest of this paper, we introduce the work Minerva based on in Sec.\ref{sec:background}. We present an overview of Minerva in Sec.\ref{sec:overview} and elaborate on its design details in Sec.\ref{sec:design}. Then we will deduce and demonstrate our IPFS file resolution scheme in Sec.\ref{sec:delay anatic}. Finally, we present the implementation in Sec.\ref{sec:implementation} and the evaluations in Sec.\ref{sec:evaluation}, and concludes in Sec.\ref{sec:conclusion}.

\section{Background and Motivation}
\label{sec:background}

In this section, we introduce the background knowledge of Apache Drill and IPFS that provide the foundation of our decentralized query system.

%Minerva is built based on the Apache drill and IPFS. Apache drill provides a complete set of SQL parsing and query plan scheduling mechanisms. IPFS represents the file objects distributed in the network with a unified file system, and ensures the anonymity of data.

\subsection{Apache drill}
Apache Drill\cite{articleDRILL} is a distributed SQL execution engine using Massively parallel processing (MPP) architecture\cite{10.1145/1953122.1953148,8187252}. Drill does not limit the storage format and location of the queried data, and provides a set of abstract operational interfaces so that the SQL parser and the query optimizer are independent of particular storage systems such as Hadoop File System (HDFS)\cite{5496972}, Amazon S3 and MangoDB\cite{articleMongo}. Different from other distributed SQL execution engine like Impala\cite{Kornacker2015ImpalaAM} and Presto\cite{inproceedings}, an important feature of Drill is that it does not need to define the schema of the data source in advance. 

Drill can automatically identify the schema during query execution and build the schema from the data, making it convenient to query semi-structured data formats such as JSON. Although Apache drill runs distributed queries with multiple machines in MMP mode, each node is peer-to-peer for drill and can opt in or out of queries processing. This also provides foundation implementing federated queries of Minerva. 

\subsection{IPFS}
\label{subsec:IPFS}
IPFS\cite{DBLP:journals/corr/Benet14, 22IPFSAndFriends} is a P2P network-based, decentralized file system that is used by blockchain systems as a data storage solution for its decentralized data store and its benefits in terms of security, privacy, and reliability\cite{9684521, 22sigcommIPFS}. IPFS stores data in interconnected data objects, each uniquely identified by its cryptographic hash, called the object's content identifier (CID). Data objects could form a range of data structures such as files, directories, hyperlinked graphs, etc. All the nodes running on the IPFS form a P2P network. When any node acquire data, based on the identifier of the desired data object, it initiates a request to the node where this piece of data is stored, and thus acquires the data. This enables decentralized data access and decentralized applications based on technologies such as blockchain to implement the task of data access through IPFS. IPFS also provides distributed data storage in areas such as the Internet of Things, cloud computing, and literature and data sharing.

\begin{figure}[htbp]
\centerline{\includegraphics[scale=0.38]{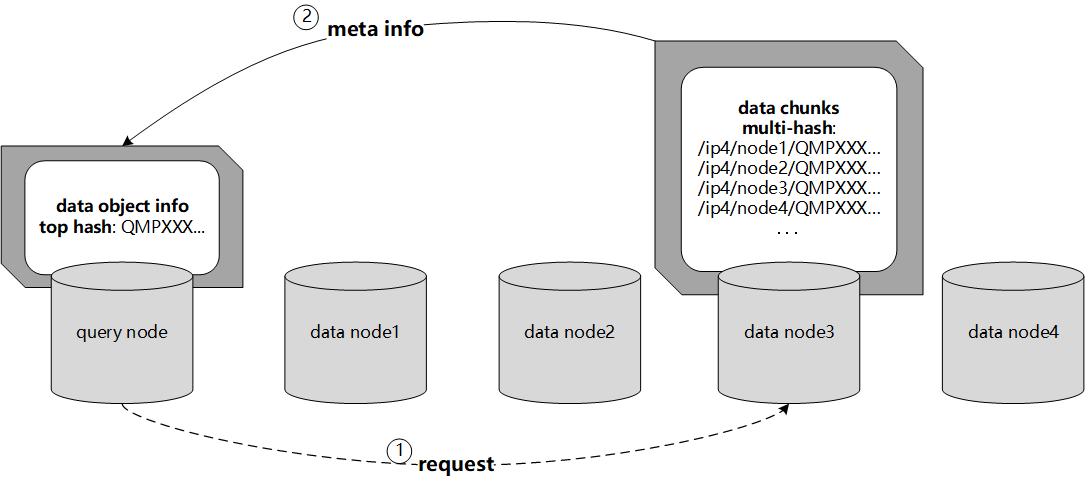}}
\caption{hash flatten of a IPFS data request}
\label{fig:DHT1}
\end{figure}

\begin{figure}[htbp]
\centerline{\includegraphics[scale=0.38]{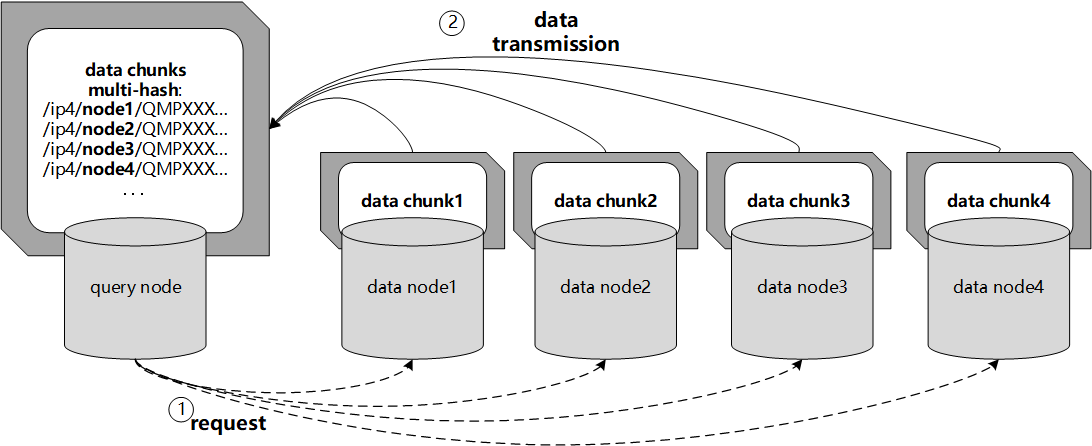}}
\caption{data transmission of a IPFS data request}
\label{fig:DHT2}
\end{figure}

IPFS uses Kademlia DHT\cite{10.5555/2541753, 22DHT:journals/corr/abs-2104-09202} to implement network routing and location requests. Its DHT stores three types of records in the form of key value pairs: 1) the mapping of data objects to their providers; 2) the mapping of nodes to their network addresses; 3) the mapping of nodes to their IPNS paths. Although DHT provides a distributed data storage scheme, the method of obtaining data location through node communication also brings high latency. One of the key efforts in Minerva's design is reducing DHT access to improve the efficiency of distributed network query.

The file chunk size specified by IPFS is $4$MB, which means that an IPFS data object has to be divided into multiple data chunks. In IPFS, all chunks of a data object are organized into a Merkle tree, where the top hash and its corresponding sub-nodes (hash) are stored in the IPFS peer that DHT can find. The leaf nodes of the Merkle tree are the hashes of each data chunk. Suppose an IPFS peer (called foreman node) need to access a specific data object from IPFS, the foreman node will find the specific IPFS peer which stores the top hash information by DHT calculating the distance between the top hash and peer CID. As Fig.\ref{fig:DHT1} shows, the children hashes information the foreman will get. When all the leaf hashes are found in cascade through the top hash, the foreman queries the DHT through these hashes to find the specific location of the chunks. Finally, the foreman could request the nodes where store the chunks to receive data chunks and form the whole data object Fig.\ref{fig:DHT2}. This indicates that for a large data object, IPFS needs at least twice DHT access to obtain the data object.

\subsection{Motivation}
The idea of Minerva stems from the combination of database science and the emerging decentralized web. Here we show two use cases that showing our motivation.

\textbf{IPFS data analytic and real-time queries.} IPFS networks have spawned a number of projects and applications. Many popular applications like Origin\cite{OriginProject} and Steepshot\cite{SteepshotProject} use IPFS as backend to store their data. These and future IPFS-based applications require real-time queries or simple analysis of data stored in IFPS to meet business needs. This is ideal for a well-developed data query system.

\textbf{Distributed federated queries over IPFS.} Because IPFS is a decentralized network, some decentralized applications that store data in IPFS or store data source information in IPFS have emerged like Mediachain\cite{MediachainProject} and Open Bazaar\cite{OpenBazaarProject}. The subordinate nodes of these applications can be considered as a data federation, and the users of the application need to query data in this data federation. Minerva provides data query tools for data federation in decentralized networks and accelerates this federated data access.

\section{System Overview}
\label{sec:overview}

Minerva is a high-performance distributed data management system over IPFS peer-to-peer networks. The architecture of Minerva is illustrated in Fig.\ref{fig:overview}, which serves as a middleware in the middle of Apache Drill query engine and IPFS storage nodes. 
Minerva consists of three key components: \emph{MinervaCoordinator}, \emph{MinervaExecutor} and \emph{MinervaCache}. 
They operate as a unity to enable federated queries on anonymous named data and improve the query performance as well. 

\begin{figure}[htbp]
\centerline{\includegraphics[scale=0.5]{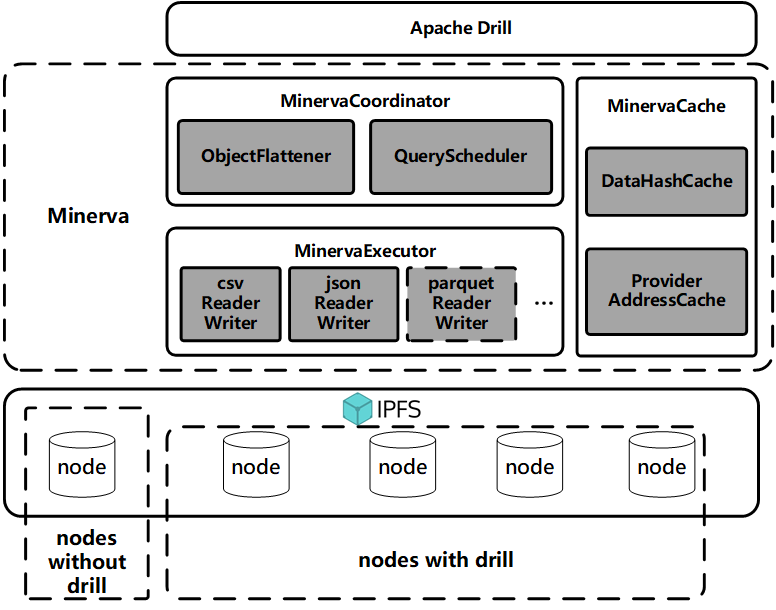}}
\caption{Architecture of Minerva Query System}
\label{fig:overview}
\end{figure}

\noindent\textbf{1) MinervaCoordinator.} The distributed query manager, \emph{MinervaCoordinator}, plays the role of assigning query tasks to anonymous nodes that store the data and generating specific data query plans. As the core of Minerva, it runs at the foreman node that launches a SQL query. MinervaCoordinator receives the Hash information of data objects parsed by Apache Drill and sends the collaborate query plans to MinervaExecutor.

\begin{itemize}
    \item \textbf{ObjectFlatterner.} The minimum unit of Minvera query is the data chunk corresponding to the leaf node of the aforementioned Merkle tree. Due to the anonymity of IPFS data objects, we designed ObjectFlatterner as a parser to extend the top hash of the data involved in its query along the Merkle tree to the hash of data chunk.

    \item \textbf{Query Scheduler.} Data objects are stored at the requested node in the form of replicas, so a data block often has multiple providers in IPFS. Query Scheduler generates and dispatches query tasks on these nodes with data block backups to coordinate multi-node collaborative queries.
\end{itemize}

%{\color{red} To Zhiyi: pls add the  descriptions of the above two items in several sentences.}

\noindent\textbf{2) MinervaExecutor.} MinervaExecutor performs the execution of query plan generated by MinervaCoordinator. Each node participating in the collaboratory query runs an instance of MinervaExecutor that dynamically constructs data schema for IPFS content objects. Both the read and write functions are supported. Owing to the diverse content format such as CSV, JSON and so on, we implement specific read/write executors for each of them.

\begin{itemize}
    \item \textbf{Reader.} Reader retrieves data and parses the data into tables. This includes taking data chunks from IPFS and formatting them into relational tables in the order of record tuples. %Reader is generally used in read/write requests.

    \item \textbf{Writer.} Writer store the data in the format of tables into IPFS. Write operations are more complex than read operations. It partitions the data tuples in the relational table according to the \emph{maximum chunk size}, organizes these chunks into Merkle tree and writes them to IPFS.
\end{itemize}

\noindent\textbf{3) MinervaCache.} The decentralized nature of Minerva undoubtedly causes a longer query processing time compared with the centralized processing either at a local machine or a computing cluster. When a content is queried multiple times, pinning its IP addresses is time-consuming, and the query result can possibly be reused. MinervaCache is a built-in component that temporarily stores the content information to be queried, the physical locations and the object hash. 

\begin{itemize}
    \item \textbf{MetadataCache.} Merkle tree parsing of data objects is a time-consuming process, involving multiple I/O and network latency. MinervaCache caches the mapping of the top hash and its corresponding leaf nodes as the meta information of the file object to reduce frequent Merkle tree flattening.

    \item \textbf{DHT-Cache.} Peers in IPFS generally uses their ID hash as the identifier. But the actual communication needs to access their corresponding IP address, which is also a process to access DHT. Minerva uses the relevant network information of the visited peers as the DHT cache content to avoid multiple IP queries.
\end{itemize}

\section{System Design}
\label{sec:design}

In this section, we present the design of our collaborative query processing system. The functionalities of major components and their interactions are elaborated. 

\subsection{Coordinator}

Minerva makes use of IPFS, an Internet-wide decentralized file storage system, as the foundation of content hosting and routing. The core functions of Minerva's coordinator are to retrieve the IP addresses that store the content and to schedule collaborative SQL queries. Given the hash of a content, the coordinator queries it through IPFS distributed hash tables (DHT). The primary challenge to this goal is the compulsory data sharding in IPFS, i.e. the Merkle tree-based content structure. Data content is usually large while the maximum file size allowed is 4MB, so that any file exceeding this limit will be partitioned into multiple chunks. Acquiring the complete mapping between content Hashes and their physical addresses is time consuming. The second challenge is whether to split the query task across collaborative Minerva workers, and if yes how to split it under various network settings.

\begin{figure}[htbp]
\centerline{\includegraphics[scale=0.4]{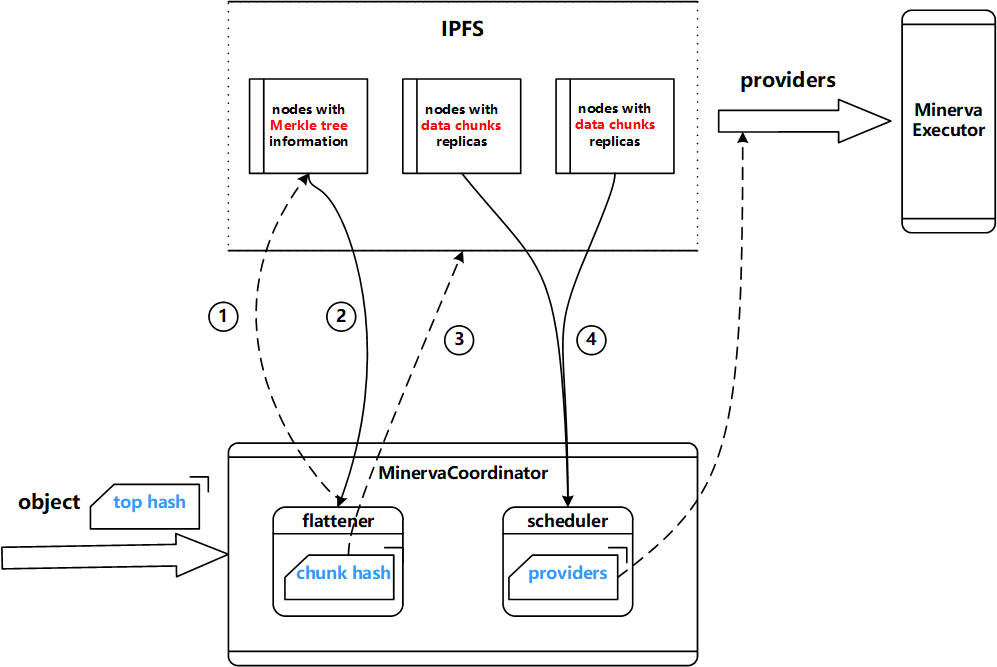}}
\caption{The workflow of MinervaCoordinator}
\label{fig:coordinator}
\end{figure}

We design MinervaCoordinator to tackle these challenges with its workflow shown in Fig.\ref{fig:coordinator}. Recall that each data shard corresponds to a leaf node of a content Merkle tree. A \emph{multi-round} DHT routing approach is developed to query all the data shards of an object. \textcircled{\small{1}} The foreman node's flattener first sends the top Hash request of the data object that corresponds to the root node of the Merkle tree to the IPFS network. \textcircled{\small{2}} After the top Hash has been discovered, the query peer returns the Hashes of the child nodes to the foreman node's flattener. If the child nodes are not the leaf nodes, the flattener repeats the DHT query. \textcircled{\small{3}} The flattener sends the Hashes of all data shards to the IPFS network. \textcircled{\small{4}} The DHT returns the list of providers for each data shard to the Minerva coordinator. The number of DHT rounds equals to the depth of the Merkle tree plus one.

To improve the efficiency of data object parsing process, we use more parallel computing methods to implement Coordinator. For example, we construct a shared thread pool for flattener execution. Each Merkle tree node will be parsed by a separate thread, allowing Merkle tree parsing to be executed in parallel. We do the same for providers of data chunk requests. We also design the cache module that supports parallel access, accelerating parsing efficiency through parallel updates and reads.

%For the compatibility of Minerva system, we will not make any changes to IPFS itself, which is the same for DHT. Therefore, our design does not change the execution process and return results of DHT. 
After hash parsing, Minerva's scheduler will select the appropriate providers and assign compute tasks. According to our design principles, we assign compute tasks to other peers and return the compute results instead of the data itself. Therefore, the scheduler will give priority to assign computing tasks to the nodes with Minerva. For IPFS peer without Minerva instance, we will download data as a last resort if such an authorization is granted.

Next, we will introduce and discuss the algorithm and implementation details of the above system architecture design of Minerva.

\subsubsection{Flattener}
\label{sec:details}
%this section need a figure!

\begin{figure}[htbp]
\centerline{\includegraphics[scale=0.5]{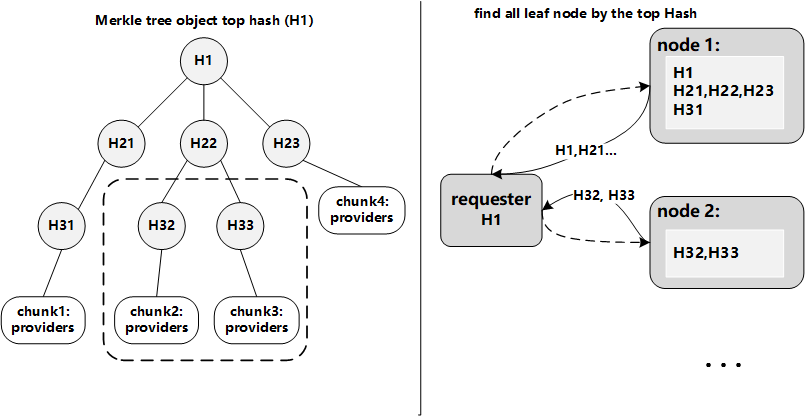}}
\caption{An object Hash parsing of IPFS}
\label{fig:flatten}
\end{figure}

%After Minerva has obtained top hash of the data objects in a query, it is necessary to query through DHT and find the data blocks contained in the data object. 

%here
%As mentioned before, in order to improve the efficiency of object parsing, 
We build ObjectFlattener to handle and parallelize all DHT queries. We call this process of locating queries ``Flattenning''. As Fig.\ref{fig:flatten} shows the Flattening of a data object, Minerva needs to obtain the all leaf hashes when the top hash (H1) is given. However, as shown in the example in Fig.\ref{fig:flatten}, when we have all the child hashes (H21, H22, H23) of H1, we also need to launch additional DHT queries to get the hashes of their child hashes. Even if we know the parent hash of leaf hash like H21, we still need to verify whether the child hash (H31) is a leaf. We define the function \emph{flatten} which receive a parent hash, and return its child hash and corresponding providers mapping array (by DHT query):
$$
(children, \langle providers \rangle)\  = flatten(parent) .
$$

For big data files, the amount of chunks may be large due to the limit of IPFS chunk size, which causes serious delay for Minerva to flatten the tree structure and query providers information. ObjectFlattener internally implements a self-managed shared thread pool, enabling all the DHT query work to execute parallel to relieve the delay. The flattening algorithm is shown in \ref{code1}. 

\begin{algorithm}[htbp]
    \caption{Object CID flatten}
    \label{code1}
    \hspace*{0.02in}{\bf Input:}
    topNodeHash\\
    \hspace*{0.02in}{\bf Output:}
    $(leafHash, \langle providers \rangle)$
    \begin{algorithmic}[1]
        \State $pool \xleftarrow{} new threadPool$
        \State $cache \xleftarrow{} MinervaCache$
        \State function flatten (node)
            \State $mapping \xleftarrow{} \varnothing$
            \For{$child \in node$}
                \If{$children \in child = \varnothing$}
                    \State $<providers>\xleftarrow{} IPFS.findProviders$
                    \State $mapping \xleftarrow{} mapping \cup (node, \langle providers \rangle)$
                \Else
                    \State $cached \xleftarrow{} cache.getCachedMapping(node)$
                    \If{$cached = \varnothing$}
                        \State $cached \xleftarrow{} pool.execute(flatten(child))$
                    \EndIf
                    \State $mapping \xleftarrow{} mapping \cup cached$
                \EndIf
            \EndFor
            \State \Return $mapping$
        \State $cache.put(mapping)$
    \end{algorithmic}
\end{algorithm}
We use MinervaCache to provide mapping information of data chunk and corresponding CID of providers and shared thread pool for parallel object hash flattening, which maximizes the use of processors computing resources and delays caused by network requests. Although we can parallelize the operations at the query foreman, the processing threads of queries are still may in the same machine (as H1 and H21 etc.). This may limit the parallelism width to the CPU cores of the machine, so the DHT query execution may still be partly sequential. We formulate the DHT query model in Section \ref{sec:delay anatic}.

\subsubsection{Scheduler}%may need a explanation figure!
Once we find all the data blocks and their locations of corresponding providers through Flatterner, the next thing we need to do is to schedule these data sources to work together. Compared with the distributed file system, the difference here is that there may be multiple data sources for each data chunk. Minerva needs a strategy to determine the query plan for each data chunk. The problem is that how to determine the workers responsible for executing each data chunk in the scheduling strategy.

When designing the scheduling strategy, we take different scenarios into consideration. For example, suppose all data blocks are held by 5 workers, it may be desirable to evenly distribute the data blocks to the worker for queries. However, the reality is that the computing power and communication bandwidth of the 5 worker nodes may be heterogeneous. Even worse, for federal queries, the computing and communication capabilities of machines may be private, and it is difficult to get this information before the query process. Even if this information is available, query nodes in the runtime may be affected by other simultaneous task loads. This makes our design extremely difficult. We design three strategies for different scenarios that work well in our experiments.
%may need modify after algorithm is well designed

For a complete Minerva query, the data objects involved in the IPFS may be distributed in different nodes. Let $N = [n_1, n_2, ... ,n_m]$ denote $m$ IPFS nodes involved in the specific query $Q$. We can obtain the data chunks of all data objects involved in $Q$ and the corresponding possible providers by Object flattener. Let $D=[d_1, d_1, ..., d_c]$ denote all $c$ data chunks flattened by all the data objects and $P=\{p_i^j\}$ denote the $j^{th}$ available provider of $i^{th}$ data chunk. Moreover, we denote $p_i^*$ as the chosen provider of $i^{th}$ data chunk.

Since it is possible for each data chunk to have multiple providers responding, inappropriate providers selection can cause severe delays when frequent queries occur. For example, when all data chunks of an object can be provided by 3 nodes ($n_1, n_2, n_3$) (the common situation is that all three nodes have complete copies), if $n_1$ is selected to provide all data chunks, it may bring high load to the local computing resources of $n_1$ and the network links from $n_1$ to Foreman. A more reasonable providers allocation method is to evenly allocate data chunks to three providers. In this case, computing resources and network occupation can be alleviated. 
%Note that if the data exists locally, no network transmission is required, which prompts us to give priority to local data.

In Minerva, we propose three strategies for scheduling data chunks and their corresponding providers and embedded them as a option:

\textbf{Random Selection.}
For general data files, when the data file is not large, the nodes that pull the data can cache copies of the data. In this case, for each chunk in the file, random selection for provider can statistically achieve the global desirable performance.

$$
p_i^*= p_i^{rand(0, p_i.length)}.
$$

\textbf{Load Balance.}
In the case of large files, single node may not be able to store the whole data. At this point, balancing the computing load of each participant node is the most appropriate strategy. We assume that each node has the same computing resources. We adopt the greedy method, that each data block selects the provider with the least work load according to global information $W$ which kept by the scheduler. $W_i$ denotes the count of data chunks assigned to $n_i$.

$$
p_i^*= \underset{p_i^j}{argmin} \ W[p_i^j].
$$

\textbf{Response Priority.}
This strategy makes each data chunk preferentially select the node that responds the earliest among all providers. Such nodes may have the best data transmission environment and idle computing resources. If data is cached locally, this policy will give priority to the local selection, avoiding network. This strategy will play a key role in queries with large data transmission. 

$$
p_i^* = SortByResponse(p_i)[0].
$$

For the provider selection of each data chunk, there is no large association and dependency order between chunks. We have adopted the same shared thread pool as the Flattener, so that all provider selections can be executed in parallel to maximize the execution efficiency of this operation. We also provide a system parameter $k$ which limits the max provider count to reduce waiting time for providers response. The full provider selection process is shown in Alg.\ref{code2}.

\begin{algorithm}[htbp]
    \caption{providers selection}
    \label{code2}
    \hspace*{0.02in}{\bf Input:}
    input $D, P, k, m$\\
    \hspace*{0.02in}{\bf Output:}
    output $\langle (d_i, p) \rangle$
    \begin{algorithmic}[1]
        \State $pool \xleftarrow{} new threadPool$
        \State $coordinator \xleftarrow{} MinervaCoordinator$
        \State $selected \xleftarrow{} \varnothing$
        \State $workload \xleftarrow{} array(m)$
        \\
        \For{$i\ form\ 0\ to\  d.length$}
            \State $pool.execute(selection(i), workload)$
        \EndFor
        \State \Return $selected$
        \\
        \State function selection (i)
            \If{$P_i \notin coordinator$}
                \State $responsed = IPFS.connect(P_i, k)$
                \State $coordinator.add(responsed)$
                \State $coordinator.update(P_i)$
            \EndIf
            \State $selected \cup SelectByStrategy(d, P, i, workload)$
    \end{algorithmic}
\end{algorithm}

\subsection{Executor}% a code figure needed!
%there are three roles (with data store) that the nodes participating in a file query or write request on IPFS play: 1) the query Foreman, the initiator of the query, uses Apache drill as the access interface to send Minerva requests to all IPFS nodes. 2) IPFS peer with Minerva support, which can be used as the query worker to execute query request. 3) IPFS peer without Minerva support. MinervaExecutor needs to design appropriate read and write strategy for the above three participants, to reduce data transmission as much as possible, and utilize the computing resources of distributed nodes.

There are three roles (with data store) that the nodes involved in a file query or write request on IPFS play: 1) the query Foreman, who initiates the query and uses Apache drill as the access interface to send Minerva requests to all IPFS nodes. 2) the IPFS node with Minerva support, which can act as the query worker to execute query requests. 3) the IPFS node without Minerva support. MinervaExecutor needs to design appropriate read and write strategies for these three participants, to minimize data transmission and utilize the computing resources of all the nodes.

For the foreman node 1), MinervaExecutor directly fetches the data from the local IPFS storage and parses data file into records. In the case of 2), the node can both read and transfer data records to Foreman, or execute queries locally, such as computing the pushed down query operations. For reducing network traffic, We default to execute the calculation task and return the calculation results to Foreman in case of 2), and provide an option to directly transfer the raw data. If the node which stores unique data does not run a Minerva instance, in the case of 3), Minerva can only transfer raw data by IPFS network, and users can only use this implementation to treat IPFS as a back-end storage system without collaborative computing. In Executer, we choose foreman to recieve all the data from these kind of nodes.

We divide all queries into read queries and write queries. Read queries are SQL query requests that contain only \textbf{SELECT} statements (and various clauses), while write queries contain some SQL statements in Data Definition Language (DDL). Typical write queries, such as \textbf{CTAS (Create Table As Select)} statement and \textbf{CTTAS (Create Temporary Table As Select)} statement, are statements that need to store data in the form of relational tables in the database.

\begin{figure}[htbp]
\centerline{\includegraphics[scale=0.4]{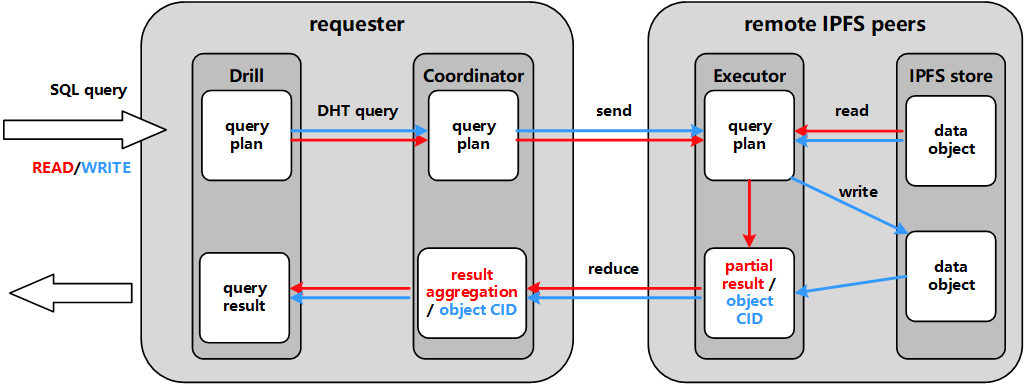}}
\caption{process of a READ/WRITE query}
\label{fig:executor}
\end{figure}

As Fig.\ref{fig:executor} show, both read and write query plans will be executed by the executor. Each executor is responsible for reading and writing local IPFS data. The read query need Executor to read the data in local store and reformat it into the form of relational table. The write query requires one step more than the read query to write a new data table into IPFS. In IPFS, this writing is in the form of new data replica, so the write operation often takes more time. Since the maximum data block size specified by IPFS is 4KB, Executor will partition the new table according to the specified block size, and organize all the data blocks into the form of Merkle tree to ensure the rationality of writing. One direct way is to slice the data file at bit-level so that the block size fits the size requirements of the IPFS data block. However, this method of file slicing requires all the data blocks to be read out and merged before they can be formatted into data records, which is inappropriate for our parallel processing and distributed computing. Therefore, writer of Minerva divides the data file into blocks at the record tuple granularity, ensuring that each block can be computed separately.

There may be many data blocks stored in one node. Our system may run out of memory if a large number of data blocks are read directly at the same time. In order to read and write data safely and reduce memory cost, MinervaExecutor first reads the boundary of a data record, and then processes each data record line by line. During the implementation of Executor, we do not adjust the implementation of IPFS, but interact with IPFS instances according to the IPFS API. This makes our executor not affect normal IPFS operations.

\subsection{Cache Design}
\label{sec:Cache Design}
As mentioned before, to reduce the query processing time, Minerva caches the CIDs and the network addresses of data providers. The goal of cache design is to preserve as much query-related information as possible while reducing the space and query time overhead of the cache itself. Minerva adds additional DHT queries to the process compared to traditional database and distributed database. So the cache of Minerva is meant to saving time for DHT queries than preserving the data results of queries.
\subsubsection{Caching File Meta-information}
The content stored in IPFS is usually organized as a Merkle tree. To acquire the information of leaf nodes, Minerva needs multiple rounds of DHT queries recursively. In our caching scheme, the meta information is cached as in a key-value store in which the key is the CID of the root node, and the value is the list of all leaf nodes related to the root CID. For a particular file, the organization and the leaf nodes of a Merkle tree is invariant. Thus, after one round of query, its structure can be cached for future use.
\subsubsection{Caching DHT Information}
In Minerva, the requester needs to know the possible data providers for each leaf node and their network addresses. Then, Minerva can distribute the query plan to these data providers. The DHT query involves rounds of DHT lookup, i.e. the lookup of the root CID and the lookup of the sub-CIDs. Therefore, the DHT lookup delay throttles the efficiency of decentralized query processing. We can choose to cache these DHT query information that avoids the repetitive DHT lookup on the same set of CIDs. The freshness of the cached DHT is crucial to the query performance in Minerva. The cached mapping between CIDs and the network addresses are usually invalid if the data provider becomes offline or deletes the data shards. In these situations, blindly bypassing the DHT lookups may cause the requester to send the query plan to invalid providers, causing a very long timeout delay. Therefore, Minerva setups a time-to live (TTL) for all the cached DHT items. Afte the TTL timer expires, the requester can remove the cached DHT items or initiates a new round of DHT lookup.
\subsubsection{Update mechanism}
When new entry need to be added to cache, and entries in the cache exceeds the cache buffer size, the least recently used (LRU) entry will be popped out of the cache to make room for new entry. When a new entry is added, the write time is recorded in the cache. In this way, when the cache hits, the cache will check the generation time of the hit entry. If the existence time exceeds the effective time, the entry is considered to have expired, and will be evicted from the cache.

\section{DHT Delay Reduction}
\label{sec:delay anatic}

In this section, we present a probabilistic model of the DHT query delay that accounts for a major part of the end-to-end latency. A fat Merkle tree structure is proposed to effective reduce the DHT query delay.

%we model the delay of decentralized queries at each stage. More precisely, the end-to-end query delay is mainly subdivided into the DHT query delay, the parallel query computing delay and the transmission delay. We propose a few new techniques to reduce the latency for each of them. 

%In this section, we model the delay of decentralized queries at each stage. More precisely, the end-to-end query delay is mainly subdivided into the DHT query delay, the parallel query computing delay and the transmission delay. A suite of techniques including Merkel tree compression and provider scheduling are presented for delay reduction. 

\subsection{DHT Query Model}

We characterize Minerva's DHT query delay that starts from sending out the top hash query till obtaining the provider list of all data chunks. Consider $N$ data chunks randomly distributed on $M$ nodes of an IPFS network, where $M$ is sufficiently large. All the data chunks are organized as the leafs in a perfect $k$-ary Merkle tree so that the height of this Merkle tree, $H$, is computed as:
\begin{eqnarray}
H = \left \lceil   \log_k N \right \rceil  + 1.
\end{eqnarray}

The total number of hashes on this Merkle tree is given by:
\begin{eqnarray}
\tilde{N} = \frac{k^{H}-1}{k-1} = \frac{k^{\left \lceil   \log_k N \right \rceil  + 1} - 1}{k-1} \approx \frac{k}{k-1}N.
\end{eqnarray}
%We assume that the DHT query time of an Hash is independent and identically distributed (a.k.a \emph{iid}). 
Let $T$ be the total DHT query delay consisting of two parts, where one is the time of flattening the Merkle tree from the root to the leafs (ialso ncluding the leafs), i.e. $T_{flt}$, and the query time to acquire the provider list for each leaf hash (i.e. $T_{pro}$).

Kademlia DHT protocol is adopted in IPFS systems so that the total DHT latency is the sum of the delays of querying hashes on the Merkle tree. Let $t^{KAD}$ be the delay of a DHT query that is a random variable with the probability density function $f(t)$ and the cumulative density function $F(t)$, $(t\geq 0)$. The probability distribution of the query time in a specific Kademlia DHT is computed in \cite{2008Faster}, and the query time is also upper bounded by $O(\lceil \log M \rceil)$ in \cite{Cai_2013}. Since our goal is not to model the exact DHT query latency, we only require that the DHT query time is independent and identically distributed (a.k.a. \textit{iid}). 

When carrying out DHT queries, we should be clear that the nodes at a Merkle tree are flattened in parallel. An interesting observation is that the flatterning times of the nodes at the same layer are \emph{independent} and identically distributed, while the flatterning times of an ancestor node and a descendant node are \emph{dependent}. The reason is rather intuitive: the flatterning time of the ancestor node includes that of his descendant nodes. Denote by $T_{flt}^{h,j}$ the flatterning time of the $j^{th}$ node and its child nodes at the $h^{th}$ layer of the Merkle tree. The following recursion relationship holds:
\begin{eqnarray}
T_{flt}^{h,j} = t^{KAD}_{h,j} + \max_{l\in(k(j-1), kj]}{T_{flt}^{h+1,l}}, \;\;\; \forall h\in[1, H]. 
\label{eqn:dht_model}
%\left\{
%\begin{array}{rcl}
%F_L       &&{0<      S_L}\\
%F_R       &&{S_R \leq 0}
%\end{array}\right.
\end{eqnarray}
Note that at the $H^{th}$ layer, i.e. the leaf layer, the flatterning time $T_{flt}^{H,j}$ is exactly $t^{KAD}$. The max operator indicates that the flatterning time of the $j^{th}$ node of the $h^{th}$ layer is the sum of his DHT query time and the longestest flatterning time of his children. For simplicity, we define $T^{h,j}_{m} = \max_{v \in ((j-1)k, jk]} T^{h+1,v}_{flt}$.Then the  probability density
function of $T^{h,j}_{flt}$ could be expressed as the recursive formula:
\begin{align}
    f_{T^{h,j}_{flt}}(t) &= \int_{0}^{t} f_{T^{h,j}_{m}}(x) f(t-x) \ dx,\\
    f_{T^{h,j}_{m}}(t) &= k (F_{T^{h+1,jk}_{flt}}(t)) ^ {k-1} f_{T^{h+1,jk}_{flt}}(t).
\end{align}

Due to the query time of all leaf nodes at the $H^{th}$ layer is \emph{iid}, $T^{H-1,j}_{m} = \max_{v \in ((j-1)k, jk]} T^{H,v}_{flt} = \max_{v \in ((j-1)k, jk]} t^{KAD}_{H,v}$. The distribution of $T^{H-1,j}_{m}$ is expressed as:
\begin{eqnarray}
    f_{T^{H-1,j}_{m}}(t) = k (F(t)) ^ {k-1} f(t).
\end{eqnarray}

\begin{comment}

less
\end{comment}

\subsection{Fat Merkle Tree}
Through the above analysis, we find that both the intermediate nodes and the leaf nodes will be queried by Minerva through DHT query, resulting in a lengthy hash flattening time. 
%We believe that the general index structure in the form of Merkle tree is not suitable. 
We propose a fat Merkle tree index structure that is more suitable for network data query. The total cost of DHT queries is reduced from the perspective of intermediate nodes and leaf nodes respectively. We expect the fewer intermediate nodes of Merkle tree, so that the fewer nodes need to be extended. However, we still need to retain the distributed storage feature of Merkle tree fast hash verification as much as possible. The structure tree  tends to be flat as much as possible. For leaf node DHT access, it is necessary to confirm whether its node identity is leaf (the parser does not know whether a child hash is a leaf node). If the identity of leaf nodes can be confirmed from the perspective of structure, most DHT queries can be reduced. 

\begin{figure}[htbp]
\centerline{\includegraphics[scale=0.45]{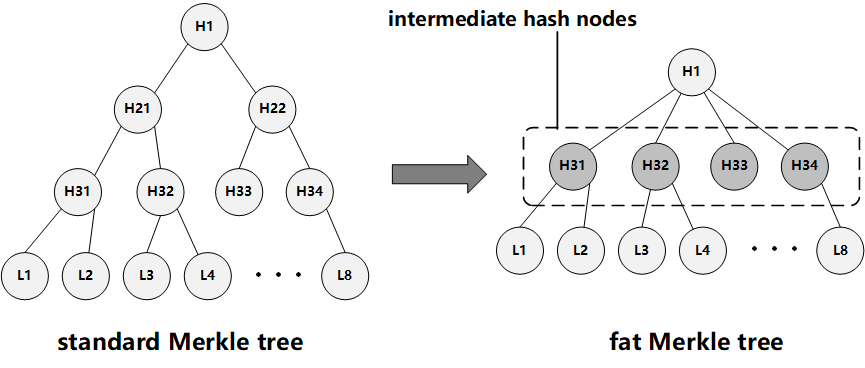}}
\caption{Structure of fat Merkle tree}
\label{fig:FMT}
\end{figure}

Our distributed file index structure protocol (fat Merkle tree structure) is very simple but effective as Fig.\ref{fig:FMT} shows: we limit an IPFS data object index Merkle tree up to three levels, including top node ( CID of the data object), intermediate node layer and leaf node layer. Each node in the middle layer points to $k$ child nodes (leaf nodes), while the top node connects all intermediate nodes. This is equivalent to flattening the middle layer of the original Merkle tree structure by compressing it all into one layer, and the parent nodes are all top nodes. At the same time, because our structure only has up to three levels of index, we can confirm that the third level node must be a leaf node, which omits the third level node query. In our structure, the total time of DHT hash flattening is:
\begin{align}
    T^{FMT}_{flt} &= T_{flt}^{1} + T_{flt}^{2} \nonumber\\
    &= t^{KAD}_{1,1} + \max_{j\in[1, \frac{N}{k}]} t^{KAD}_{2,j}.
    \label{eqn:dht_FMT}
\end{align}

The probability density function of total query time $T^{FMT}_{flt}$ can be expressed as:
\begin{eqnarray}
    f_{T^{FMT}_{flt}}(t) &= \int_{0}^{t} (\frac{N}{k} (F(x))^{\frac{N}{k}-1} f(x))(f(t-x)) dx.
\end{eqnarray}

\begin{theorem}
For a certain file with $N$ chunks, the flattening time of standard Merkle tree is longer than fat Merkle tree.
\end{theorem}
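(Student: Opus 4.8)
The plan is to prove the stronger statement that the standard perfect $k$-ary Merkle tree's flattening time $T_{flt}:=T^{1,1}_{flt}$ stochastically dominates the fat-tree flattening time $T^{FMT}_{flt}$; concretely, I will build a coupling of the two schemes on their common family of \emph{iid} DHT-query times $\{t^{KAD}_{h,j}\}$ under which $T_{flt}\ge T^{FMT}_{flt}$ holds pointwise. This immediately gives $\mathbb{E}[T_{flt}]\ge \mathbb{E}[T^{FMT}_{flt}]$ and domination of the whole tail — the natural reading of ``longer'' in the probabilistic delay model of Section~\ref{sec:delay anatic}. I work in the meaningful regime $H\ge 3$; the case $H\le 2$ is immediate, since the standard-tree flattening then already equals $t^{KAD}_{1,1}+\max_{1\le j\le N}t^{KAD}_{2,j}\ge t^{KAD}_{1,1}+t^{KAD}_{2,1}$, while $T^{FMT}_{flt}=t^{KAD}_{1,1}+t^{KAD}_{2,1}$.

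First I would unroll the recursion (\ref{eqn:dht_model}), with base case $T^{H,j}_{flt}=t^{KAD}_{H,j}$ at the leaf layer. Applying the elementary identity $\max_a\bigl(X_a+\max_b Y_{a,b}\bigr)=\max_{a,b}\bigl(X_a+Y_{a,b}\bigr)$ repeatedly from the root downward collapses the nested maxima into a single one over paths:
\begin{equation}
T_{flt}\;=\;\max_{P}\ \sum_{h=1}^{H} t^{KAD}_{h,\,P(h)},
\end{equation}
where $P$ ranges over all $k^{H-1}=N$ root-to-leaf paths and $P(h)$ is the index of the node that $P$ visits at layer $h$. Because every path passes through the unique root, $t^{KAD}_{1,P(1)}=t^{KAD}_{1,1}$ factors out of the maximum.

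Next I would lower-bound $T_{flt}$ by keeping only the single summand from layer $H-1$ (the parents of the leaves) and dropping the rest, which is valid since each $t^{KAD}_{h,P(h)}\ge 0$:
\begin{equation}
T_{flt}\;\ge\; t^{KAD}_{1,1}+\max_{P} t^{KAD}_{H-1,\,P(H-1)}\;=\; t^{KAD}_{1,1}+\max_{1\le j\le N/k} t^{KAD}_{H-1,j},
\end{equation}
the equality because, as $P$ ranges over all root-to-leaf paths, its layer-$(H-1)$ node ranges over all $k^{H-2}=N/k$ internal nodes directly above the leaves. I then couple the fat tree to the standard tree by assigning its root query the variable $t^{KAD}_{1,1}$ and its $N/k$ middle-layer queries the variables $t^{KAD}_{H-1,1},\dots,t^{KAD}_{H-1,N/k}$; this is admissible precisely because all the $t^{KAD}$'s are \emph{iid}. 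Under this coupling the right-hand side above is exactly the $T^{FMT}_{flt}$ of (\ref{eqn:dht_FMT}), so $T_{flt}\ge T^{FMT}_{flt}$ pointwise. The inequality is moreover strict with positive probability (hence strict in expectation) whenever $\Pr[t^{KAD}>0]>0$, because for $H\ge 3$ at least one discarded summand — e.g. the layer-$2$ term, or any leaf-layer term — is then positive with positive probability.

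The step that requires real care is the choice of which layer to keep in the lower bound: a layer-by-layer comparison does \emph{not} work, because at layer $2$ the standard tree takes a maximum over only $k$ children while the fat tree maximizes over $N/k$ nodes, so the standard tree is not visibly the larger one layer by layer. The observation that unlocks the proof is that the standard tree's extra depth forces the existence of one layer — namely $H-1$ — that is simultaneously deep enough to sit behind an independent root query and wide enough ($N/k$ nodes) to reproduce the \emph{entire} fat-tree maximum, so comparing just that layer already suffices. A minor bookkeeping remark I would add: when $N$ is not an exact power of $k$, the ``perfect tree'' idealization should be read with $\lceil N/k\rceil$ leaf parents on both sides, which only strengthens the inequality.
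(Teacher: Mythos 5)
Your proof is correct, and it reaches the conclusion by a genuinely different route than the paper's. The paper also starts from the recursion (\ref{eqn:dht_model}) and lower-bounds the standard tree's flattening time by the root query plus a maximum, but it keeps the maximum over all $N$ \emph{leaf} query times, $T^{1,1}_{max}=\max_{j\in[1,N]}t^{KAD}_{H,j}$ (inequality (\ref{eqn:dht_proof_2})), and then must separately prove that the expected maximum of $N$ \emph{iid} variables exceeds that of $N/k$ of them by comparing the two densities $NF^{N-1}f$ and $\frac{N}{k}F^{N/k-1}f$ (inequality (\ref{eqn:dht_proof_3})); the conclusion is an inequality between expectations only. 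You instead unroll the recursion into a maximum over root-to-leaf path sums and retain the layer whose width is \emph{exactly} $N/k$ — the parents of the leaves — so that after coupling the fat tree's middle-layer queries to those $N/k$ variables the comparison becomes pointwise, with no max-of-$N$ versus max-of-$N/k$ lemma needed. What your approach buys is the strictly stronger conclusion of stochastic dominance (the whole delay distribution, hence all quantiles and tail probabilities, not just the mean), plus the clean observation that a naive layer-by-layer comparison fails because the standard tree's layer~$2$ maximizes over only $k$ children; what the paper's approach buys is that it stays entirely within the $T_{max}$ recursion and the explicit density formulas already developed in Section~\ref{sec:delay anatic}, avoiding the path-maximum reformulation and the coupling formalism. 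One cosmetic caveat: your appeal to ``all the $t^{KAD}$ are \emph{iid}'' to justify the coupling is exactly the hypothesis the paper states, so nothing extra is being assumed, but you should say explicitly that the distributional identity $T^{FMT}_{flt}\stackrel{d}{=}t^{KAD}_{1,1}+\max_{1\le j\le N/k}t^{KAD}_{H-1,j}$ is what licenses transferring the pointwise bound back to a statement about the two schemes' actual (uncoupled) delays.
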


\begin{proof}
Let $T^{STD}_{flt}$ be the flattening time of standard Merkle tree and $T^{h,j}_{max}$ be as $T^{h,j}_{m} = \max_{l \in ((j-1)k, jk]} T^{h+1,l}_{max}$. $T^{H-1,j}_{max}=\max_{l \in ((j-1)k, jk]} t^{KAD}_{H,l}$. From the equation \ref{eqn:dht_model},
\begin{align}
T_{flt}^{H-1,j} = \max_{l\in(k(j-1), kj]}{T_{flt}^{H,l}} = T^{H-1,j}_{max}.
\label{eqn:dht_proof_1}
\end{align}

From \ref{eqn:dht_proof_1}, we could easily the following result by deduction:
\begin{eqnarray}
    T_{flt}^{h,j} &\geq T_{max}^{h,j} + t_{h,j}^{KAD} \;\;\; \forall h \in [1, H-1] \nonumber\\
    E_{T_{flt}^{h,j}} &\geq E_{T_{max}^{h,j}} + E_{t^{KAD}} \;\;\; \forall h \in [1, H-1] 
    \label{eqn:dht_proof_2}
\end{eqnarray}

Because $T_{max}^{1,1}$ means the maximum time of all the leaf nodes, we can rewrite $T_{max}^{1,1}$ as $T_{max}^{1,1} = \max_{j\in[1,N]} t^{KAD}_{H,j}$, the mathematical expectations of $T_{max}^{1,1}$ is:
\begin{align}
    E{}E_{T_{max}^{1,1}}&=\int_{0}^{+\infty} xN(F(t))^{N-1}f(t)\;dt\nonumber\\
    &>\int_{0}^{+\infty} x\frac{N}{k}(F(t))^{\frac{N}{k}-1}f(t)\;dt
    \label{eqn:dht_proof_3}
\end{align}

Combine \ref{eqn:dht_proof_2}, \ref{eqn:dht_proof_3} and \ref{eqn:dht_FMT}, we could get the result as:
\begin{align}
    E_{T^{STD}_{flt}}=E{T_{flt}^{1,1}} &\geq E_{T_{max}^{1,1}} + E_{t^{KAD}} \nonumber \\
     &\geq \int_{0}^{+\infty} x\frac{N}{k}(F(t))^{\frac{N}{k}-1}f(t)\;dt + E_{t^{KAD}} \nonumber \\
     &= E_{T^{FMT}_{flt}}
\end{align}

\end{proof}

It should be noted that our data structure does not sacrifice the hash verification ability of Merkle tree. We use the middle layer of the tree as the hash check layer. In our case, there is no need for frequent data verification, so the fat merkle tree structure is more suitable for our data query scenario.

For general Merkle tree data structures, converting indexes into the new structure described above by path compression is not complicated. We know that the nodes on each Merkle tree are mapped by the hashes of their child nodes. Thus, we only need to hash-map the the nodes in the layer $\lceil \log_k N \rceil$ of the original Merkle tree together and use the resulting hash as the top hash of the new structure. The cost of this transformation is $\mathcal{O}(\frac{N}{k})$ and is trivial in data organization

\subsection{Providers Solving Reduction}
In order to get the exact location of each data chunk, the request for the provider list is not negligible, but it also causes a significant delay. To solve this problem as much as possible, we analyze it from the point of view of data distribution. We find that for data files with fewer data blocks, the lower probability of distributed storage they with.

In solution of Minerva, we pay special attention to data with small size. Such data queries can often be completed by a single machine. We assume that data objects less than the threshold $\alpha$ are stored on the node where the top hash is located. This assumption allows us to omit the request of the provider list and directly send the query execution plan to the node, we call this greedy provider resolving. If the node with top hash does not store all data chunks, Minerva will continue to request the provider list for all chunks and reschedule their execution. So the delay of provider seeking becomes:
$$
    T_{pro} = \begin{cases}
    \max_{m=1}^{M} t^{KAD} + \delta, & \textrm{discentral\ store,}\\
    0, & \textrm{central\ store.}
    \end{cases}
$$
where $\delta$ refers to the time cost for the remote node to discover and return errors when assuming errors. However, the node only needs to return the control information when it finds that the data is not stored locally. So $\delta$ is small, and the time cost caused by the error is trivial.

\section{Implementation}
\label{sec:implementation}
In this section, we introduce the implementation details of Minerva. Instead of building an individual process communicating with Apache Drill and IPFS, we implement Minerva as a storage plugin of Drill to exploit the query planning mechanism of Drill. 
In order not to make any changes to IPFS, we interact with the IPFS daemon to operate the data in IPFS, and these interactions are completed with IPFS-JAVA-API. The Minerva system is written in JAVA, consisting of 3150 lines of core module and 1800 lines of object format interface module. 

\begin{figure}[htbp]
\centerline{\includegraphics[scale=0.35]{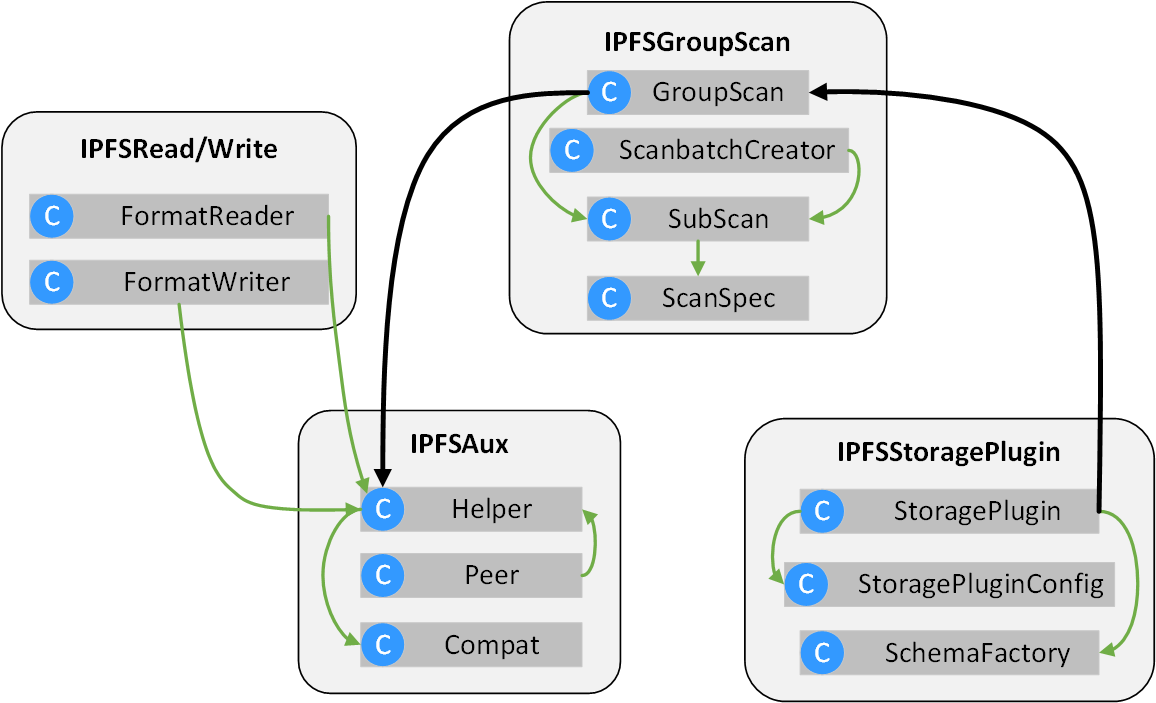}}
\caption{modules of Minerva}
\label{fig:UML}
\end{figure}

The code of Minerva is form into four main modules as Fig.\ref{fig:UML} shows: (1) \textit{IPFSStoragePlugin} defines the storage interface for Drill; (2) \textit{IPFSGroupScan} can as a kernel module of Minerva generates and expresses the query plan; (3) \textit{IPFSRead/IPFSWrite} is in charge of reading and writing local data with various form. For all the data formats, we build their specific loading rules and algorithm into a sub-class (4) \textit{MinervaAux} contains a set of functions to configure IPFS clients for communication, generate the schema of a query at runtime, and provide operational contexts interconnecting different modules.

As the entrance of other operations, \textit{IPFSStoragePlugin} incorporates the basic objects supporting the operations of Drill and IPFS. It implements the \textit{StoragePlugin} function that specifies the interactions between the IPFS storage module and the query engine modules.More exactly, \textit{IPFSStoragePlugin} inherits the class \textit{AbstractStoragePlugin} and overrides some methods at parent classes including \textit{supportsRead}, \textit{supportsWrite} and \textit{registerSchemas}, etc.

The workflow of \textit{IPFSStoragePlugin} operates as follows. The daemon process of Drill (Drillbit) first initializes a \textit{StoragePluginRegistry} object that records all storage plugins and configures the mappings between the storage backends and their corresponding plugin realizations. \textit{StoragePluginRegistry}
then scans Minerva’s \textit{IPFSStoragePlugin} and loaded it in a modular manner. \textit{StoragePluginRegistry} creates an object of
\textit{IPFSStoragePlugin} class (via createPlugins), and calls the constructor (via newInstance) to complete the initialization of Minerva’s storage plugin. At last, \textit{StoragePluginRegistry} launches IPFSStoragePlugin that paves the way of Internet wide query processing.

\textit{IPFSGroupScan} consists of \textit{IPFSGroupScan} and
\textit{IPFSSubScan} classes. Here, \textit{IPFSGroupScan} is the underlying logic operator defined in Minerva. It scans the entire dataset and reads data from the storage backend to the memory so that other data processing. And it makes the global decision of DHT parsing and strategy. \textit{IPFSGroupScan} generates \textit{SubScan} operators to enable the parallel query processing and to capture the data dependency of subtasks. Each time a requested initiates a query processing task, an \textit{IPFSGroupScan} object is created, and the root CID of the dataset stored in IPFS is obtained. After recursively expanding the root CID, \textit{IPFSGroupScan} obtains the network addresses of the providers across Internet. All the above information is crucial to generating an \textit{IPFSSubScan} class.

\textit{IPFSSubScan} is the sub-operator of \textit{IPFSGroupScan}, representing the query operations executed by a provider on its data stored in IPFS. \textit{IPFSSubScan} includes all the information needed by a successful query execution, such as the CIDs of data shards and data formats stipulated by SQL. The \textit{IPFSGroupScan} class can be serialized and transferred to all the working nodes as a part of the query plan, and deserialized to acquire the information regarding to query execution.

\textit{IPFSAux} module consists of \textit{IPFSHelper}, \textit{IPFSCompact} and \textit{IPFSPeer}. The main functionality of this module is to interconnect different components that offer the compatibility with underlying IPFS operations and the encapsulation. \textit{IPFSHelper} provides the encapsulation of underlying IPFS APIs. \textit{IPFSCompact} is designed to handle the IPFS requests with IPFS instance. \textit{IPFSPeer} is an abstraction of an IPFS node in the network, containing the node ID and the network address.

%consists of 5,400 lines of JAVA code, together with the IPFS interface of around 3500 lines of JAVA code. 

For the specification of data queries, we define Minerva's input and output format as a generic data table. And thanks to the schema-free feature of drill, we do not need to specify a schema for the data in advance which ensures anonymity. Users only need to specify the basic data file format to query a IPFS object. The query format of the sample Minerva is as code \ref{code:SQLQueryExample}:
\begin{figure}[h]
\begin{minted}[fontsize=\scriptsize,breaklines]{sql}

select * from ipfs.
`/ipfs/QmdfTbBqBPQ7VNxZEYEj14VmRuZBkqFbiwReogJgS1zR1n#csv`;

select * from ipfs.`/ipfs/top hash#format`;

select * from ipfs.`/ipfs/top hash/example.format`;
\end{minted}
\caption{SQL query example of Minerva}
\label{code:SQLQueryExample}
\end{figure}

%Currently, there are no modifications to IPFS in the Minerva implementation version, so currently no IPFS interface related to the vertical partition available. In subsequent work, implementation of Minerva will be expanded to support more data formats and adapted to latest IPFS features.

\section{Evaluation}
\label{sec:evaluation}
In this section, we presents the evaluation of Minerva. We first introduce our experiments setup, followed by the evaluations of delay reduction and MinervaCache. We further present the overall performance of Minerva on different network environments and query types.

%in section \ref{ex_setup}, then analyse of delay reduction is presented in section \ref{ex_DHT_relay}, and evaluation of MinervaCache is presented in section \ref{ex_cache}. We also present Minerva's performance on different system arguments and query types in section \ref{ex_params}. Finally, overall performance of Minerva is present in section \ref{ex_overall}.

\subsection{Setup}
\label{ex_setup}
\subsubsection{Methodology}
%We conduct our experiments on 6 instances of Tencent cloud, with each instance equipping 2 CPU cores. Due to the physical location of the machines used in the experiment is distributed in Shanghai and Beijing, the bandwidth of each instance is degraded to 1 Gbps. 
We conduct our experiments on 8 Intel(R) Xeon(R) 3.20GHz machines, with each machine equipping 16 CPU cores and 32 GB memory. The bandwidth of each machine is 10 Gbps in cluster mode and degraded to 1 Gbps and 100 Mbps when simulating the WAN environments. Minerva as a back-end service system, has strict latency requirements for service in real application, so we use query latency as a performance evaluation metric in the experiment. A complete query includes two stages: query plan generation and query plan execution, which are completed by MinervaCoordinator and MinervaExecutor respectively. We also analyze the two stage in the experiment.

We deploy the same Minerva instance on each experimental machine, and all machines join the same IPFS private network to build a distributed Minerva cluster. Because Apache drill (SQL parser for Minerva) can accept requests in the form of REST interfaces, we randomly select the foreman node used to launch queries, and use REST form requests to continuously input query tasks. After all queries are completed, we analyze the statistical query performance through the profile.

\subsubsection{Dataset}
\begin{itemize}
    \item OAG dataset\cite{inproceedingsOAG}: Open Academic Graph (OAG) is a public dataset of knowledge mapping containing 80GB data, which records 150 million academic papers and their reference relationships. We used OAG dataset  to assess Minerva query performance in real-world environments.
    \item TPC-DS dataset\cite{inproceedingsTPC}: TPC-DS is an evaluation framework for database performance test, which covers simple and complex SQL queries in a variety of real scenarios. TPC-DS will generate test datasets and corresponding query statements. This framework is used in our experiments to evaluate Minerva's performance on different types of queries. We set the scale factor to 10 and turn the generated dataset to JSON format.
    \item Synthetic dataset: We build synthetic datasets to simulate the specific data structure, different count of data chunks and records. The synthetic datasets range in size from 1MB to 1GB. We constructed the synthetic data structure in JSON format that meets our test requirements and are uploaded it to IPFS in accordance with our index structure format.
\end{itemize}

\subsection{DHT Delay Reduction}
\label{ex_DHT_relay}
We conduct an experimental analysis on DHT parsing of data, the most critical step in network query for Minerva system. We set the chunk size to 1MB, and as the number of chunks changes, the parsing time for the top hash should also change.
\begin{figure}[htbp]
  \centering
  \subfigure[DHT latency reduction methods]{
  \label{fig:ex-DHT1-a} 
  \begin{minipage}{.22\textwidth}
  \centering
    \includegraphics[scale=0.28]{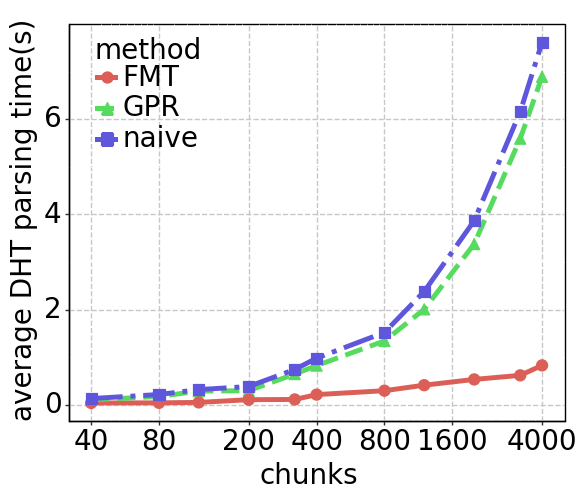}
  \end{minipage}
  }
  \subfigure[Overall latency reduction effect]{
  \label{fig:ex-DHT1-b} 
  \begin{minipage}{.22\textwidth}
  \centering
    \includegraphics[scale=0.28]{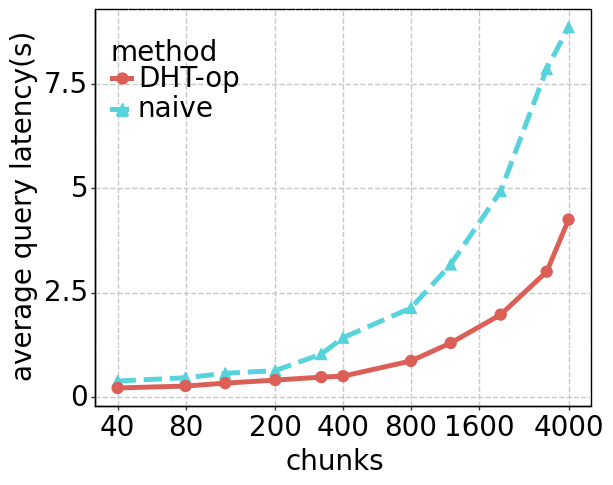}
  \end{minipage}
  }
   \caption{DHT latency reduction}
  \label{fig:ex-DHT1} 
\end{figure}

As shown in Fig. \ref{fig:ex-DHT1}, we test the two methods of reducing DHT parsing time described in Section.\ref{sec:delay anatic}, namely fat Merkle tree (FMT) and greedy provider resolving (GPR). The overall DHT query latency of Minerva increases with the increase of chunks. This is because the more chunks, the more Merkle tree nodes need to be parsed, and each node parsing may require network queries.

After Minerva uses the DHT reduction method like FMT and GPR, the overall DHT query time has decreased compared to without it (native) 
 Fig.\ref{fig:ex-DHT1-a}. When there are fewer chunks, neither of the two optimization methods significantly optimizes DHT latency, because fewer DHT query operations for the fewer chunks and the process of DHT query is highly parallelized. As the number of chunks becomes large, the optimization effects of the two methods are gradually evident. GPR is approximately $5-10\%$ less than native DHT queries. FMT decreases by about $80\%$, and the reduction effect increases with the number of chunks, which is consistent with our analysis. The GPR method reduces the provider resolution time of the leave hash, which is proportional to the amount of chunks. While FMT reduces the parsing of leaf nodes and almost all the intermediate nodes in the Merkle tree, resulting in a significant reduction in the final DHT time.

In Fig.\ref{fig:ex-DHT1-b}, we also show the overall query latency of Minerva with both DHT reduction method. In our evaluation, we only recorded the query latency when the data was first accessed to avoid being interfered by multilevel caching. The results show that the optimized Minerva improves significantly over the native DHT query, reducing the overall latency by about 50\%. However, this improvement decreases with the number of nodes in the data block, because as the number of chunks increase, query execution time gradually dominates the cost of the entire query, and DHT improvements become less apparent.

\begin{figure}[htbp]
  \centering
  \subfigure[Impact of tree height on DHT]{
  \label{fig:ex-DHT2_a} 
  \begin{minipage}{.2\textwidth}
  \centering
    \includegraphics[scale=0.26]{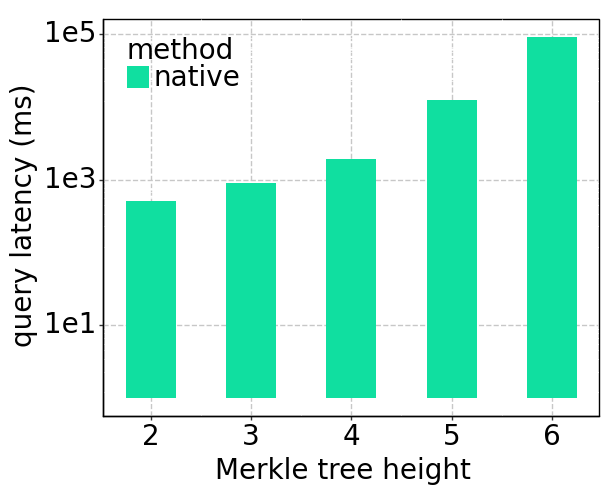}
  \end{minipage}
  }
  \subfigure[Rescheduling cost of DHT-GPR]{
  \label{fig:ex-DHT2_b} 
  \begin{minipage}{.25\textwidth}
  \centering
    \includegraphics[scale=0.26]{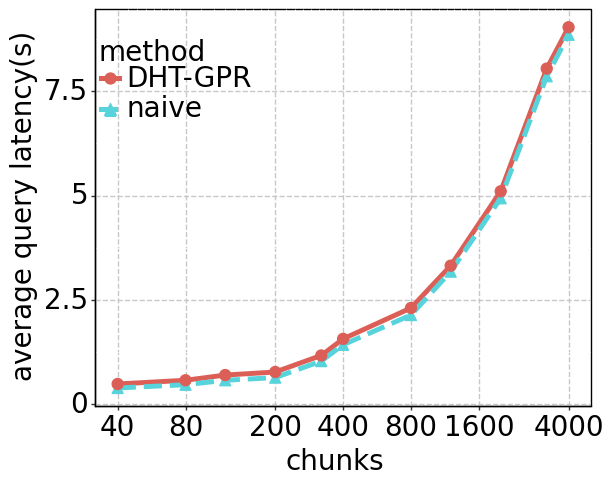}
  \end{minipage}
  }
   \caption{DHT latency reduction}
  \label{fig:ex-DHT2} 
\end{figure}

To confirm the effect of DHT delay reduction in our experiment, we test the parsing performance of Minerva on various Merkle tree structures. As shown in Fig.\ref{fig:ex-DHT2_a}, we change the tree height of the final Merkle tree by changing the chunk size. Minerva is sensitive to structural changes in the Merkle tree. As the tree height increases, query latency increases exponentially. This is because as chunks increases, the intermediate nodes on the Merkle tree also increases. This result can also explain why FMT is more effective than GPR.

In order to explore the additional overhead of GPR, we show the GPR and the query delay of native Minerva in Fig.\ref{fig:ex-DHT2_b} when the assumptions are incorrect. We find that the additional cost of Minerva correcting errors and re-performing DHT parsing of the top hash is almost a fixed value when the assumption is incorrect (that is, when all data is not located on the top hash node). In our experiment, this value is about $0.1-0.2s$, which is not significant compared to the overall query latency and GPR gain for Minerva.

\subsection{Cache Performance}

\begin{figure}[htbp]
\centerline{\includegraphics[scale=0.26]{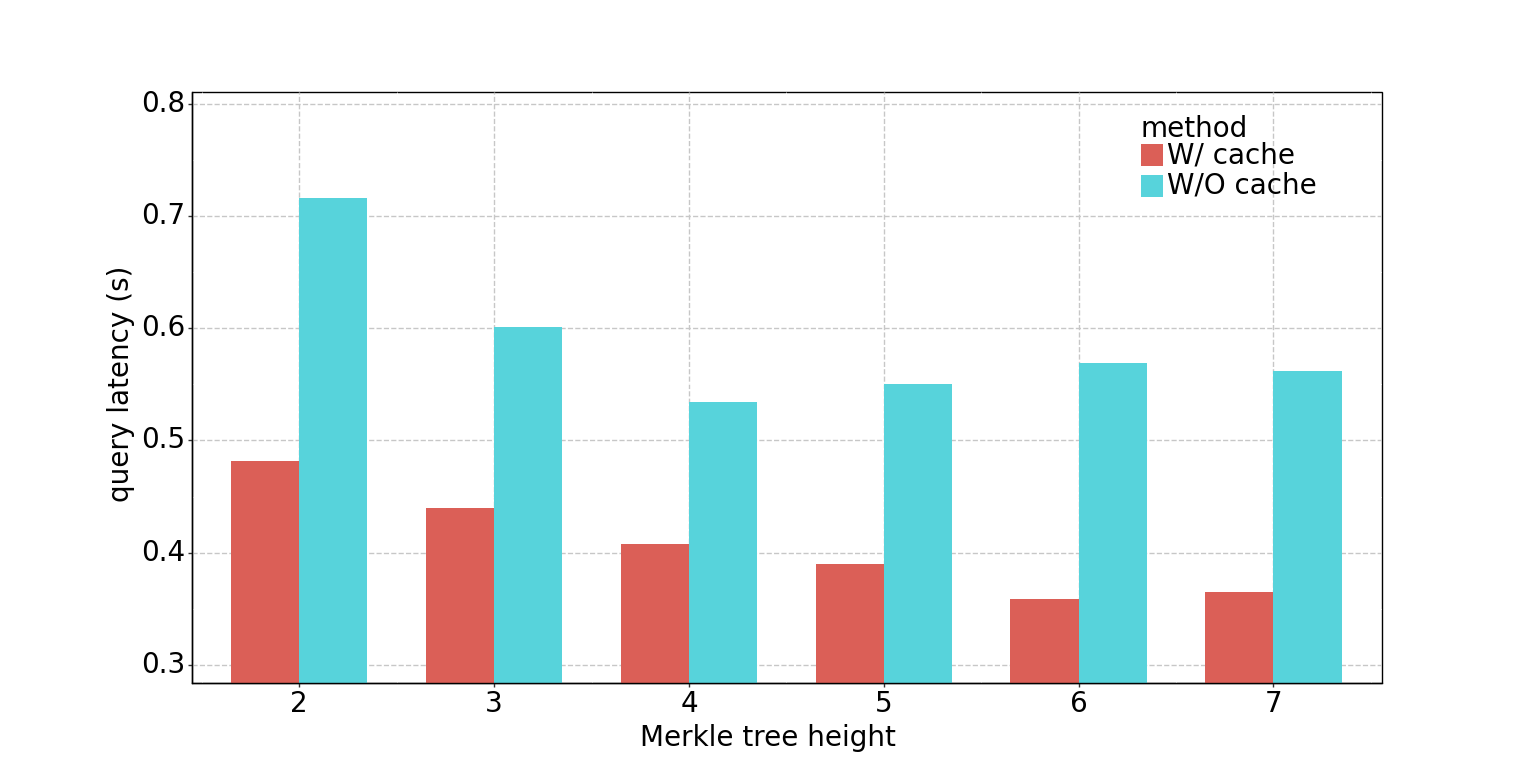}}
\caption{Overall effect of MinervaCache}
\label{fig:cache}
\end{figure}

\label{ex_cache}
We hereby evaluate MinervaCache, the second method used by Minerva to reduce data query delays on IPFS. As shown in Fig.\ref{fig:cache}, we test the query performance improvement using MinervaCache. Compared to the case where cache is not enabled, the enabling of MinervaCache reduces the average query latency by $50-60\%$, and the effect increases as the parallel execution nodes increase. At the same time, we observe that without using cache, query latency does not decrease when four nodes execute in parallel. This is because as the nodes increases, the pressure on DHT top hash parsing and planning also increases, so the overall delay does not decrease. MinervaCache greatly reduces this impact, allowing Minerva to efficiently query with a larger parallel width.

\begin{figure}[htbp]
\centerline{\includegraphics[scale=0.26]{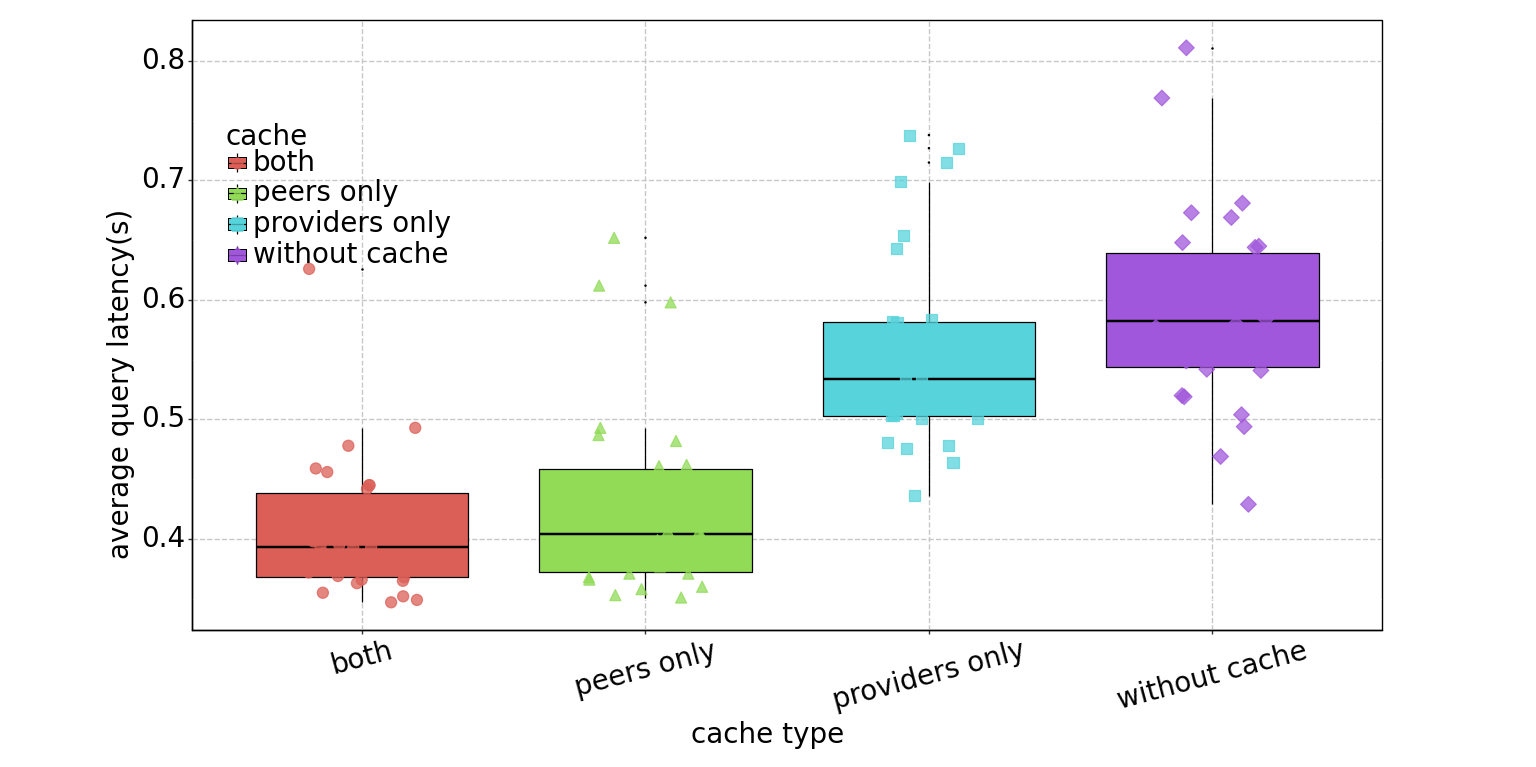}}
\caption{component of MinervaCache}
\label{fig:cache analysis}
\end{figure}

We further compare two parts of MinervaCache, namely, the cache for data chunk providers and the physical address cache for known IPFS peers. As shown in Fig.\ref{fig:cache analysis}, the caching strategy of providers only has a performance improvement of about $9\%$ compared to no caching, while the caching strategy of peers only has a latency reduction of nearly $30\%$. At the same time, the query results of providers only strategy has a larger variance than those of peers only strategy. This is because the cache of providers is for data chunks, and the data objects queried each time are not always the same, so the cache hit rate is low. The peers' information is relatively stable and confined in a small set so that the cache hit ratio is very high.

%and has a larger cache space.

\begin{figure}[htbp]
\centerline{\includegraphics[scale=0.26]{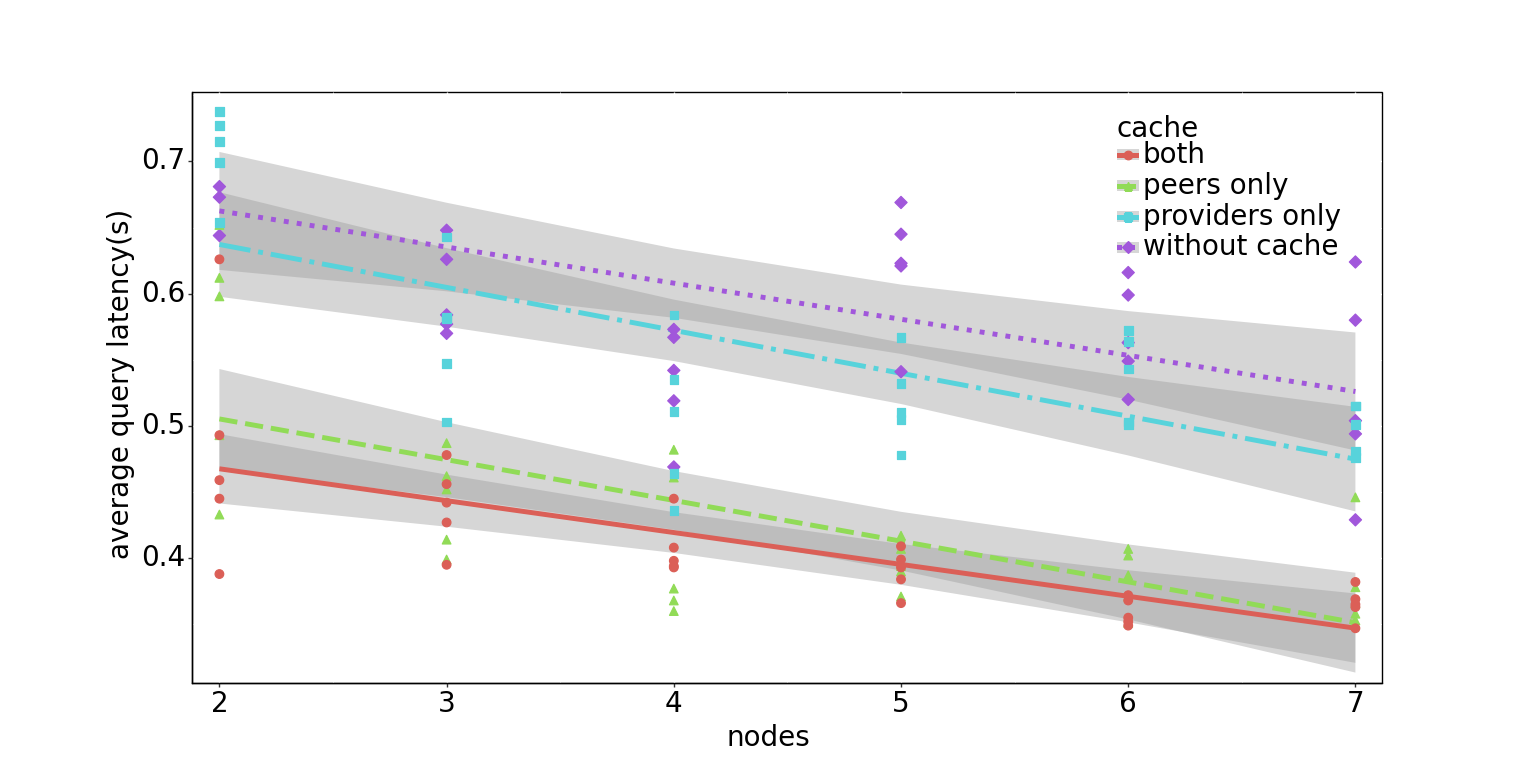}}
\caption{Impact of cache effect on average query latency}
\label{fig:cache analysis 2}
\end{figure}

We also simply fit the effect of MinervaCache with its components. As shown in Fig.\ref{fig:cache analysis 2}, with the increase in the number of execution nodes, the caching effects of both providers only and peers only strategies gradually become apparent. This is because as the parallel nodes increase, the more peer caches need to be used, and the more complex the data distribution of data chunks becomes. At this point, the provider's cache strategy will gradually play a greater role.

\subsection{System Parameter Searching}
\label{ex_params}
A complete data file stored in IPFS needs to be divided into smaller data chunks and organized into Merkle tree structures. For a query, we can assume that the data is divided into data chunks of size $s$. Since the size and partitioned data chunks both have influence on  query planning and query execution, the factor $s$ significantly affect system performance. We performed a parameter search on $s$, and the results are shown in Fig.\ref{fig:param_a}.

\begin{figure}[htbp]
\centerline{\includegraphics[scale=0.35]{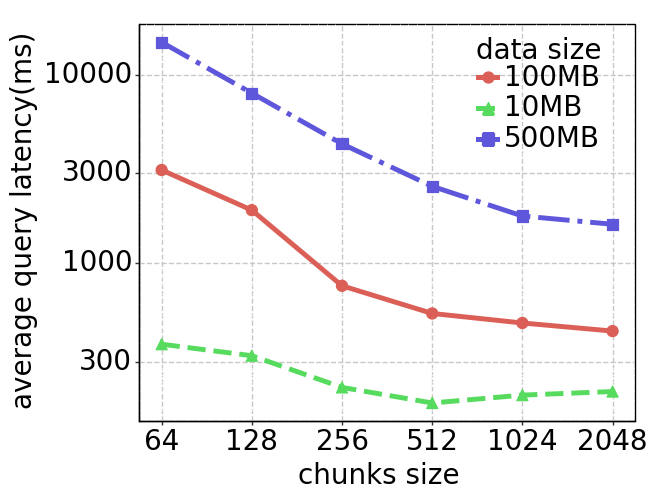}}
\caption{Impact of chunk size on average query latency}
\label{fig:param_a}
\end{figure}

\begin{figure}[htbp]
\centerline{\includegraphics[scale=0.26]{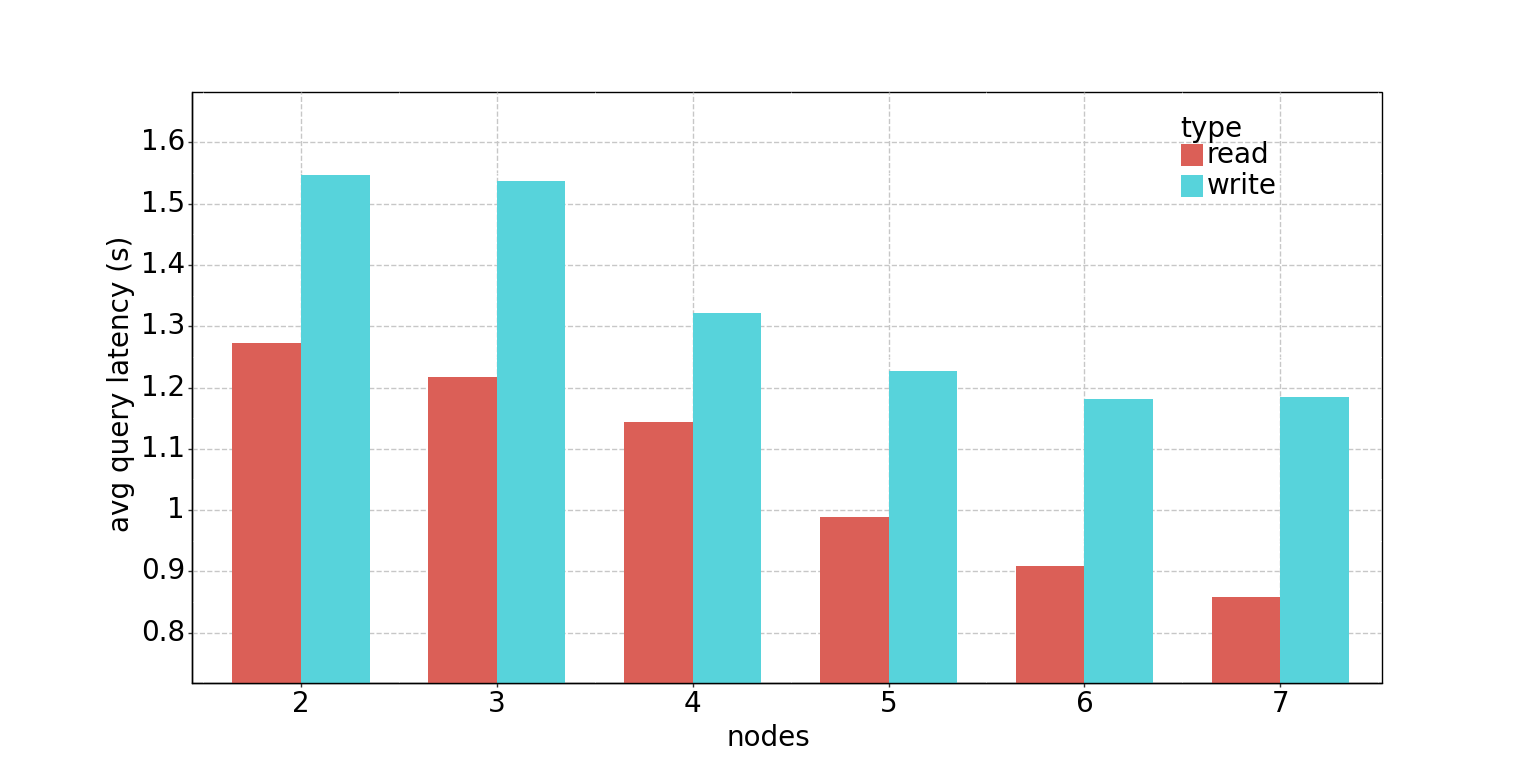}}
\caption{Impact of chunk size on read/write latency}
\label{fig:param_b}
\end{figure}

\begin{figure*}
  \centering
  \subfigure[OAG dataset]{
  \begin{minipage}{.22\textwidth}
  \centering
    \includegraphics[scale=0.28]{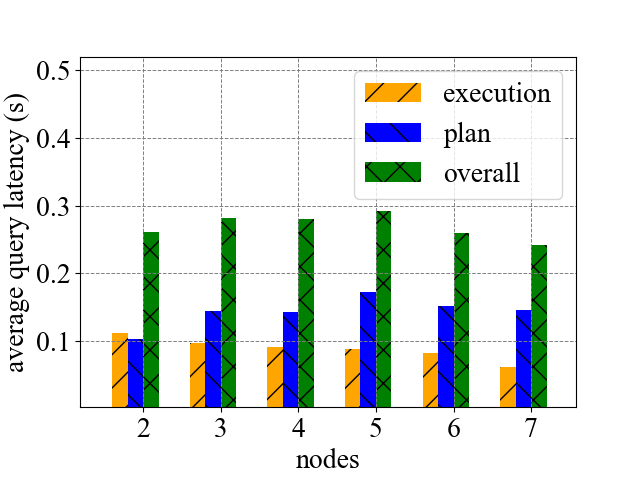}
  \end{minipage}
  }
  \subfigure[TPC-DS dataset]{
  \begin{minipage}{.22\textwidth}
  \centering
    \includegraphics[scale=0.28]{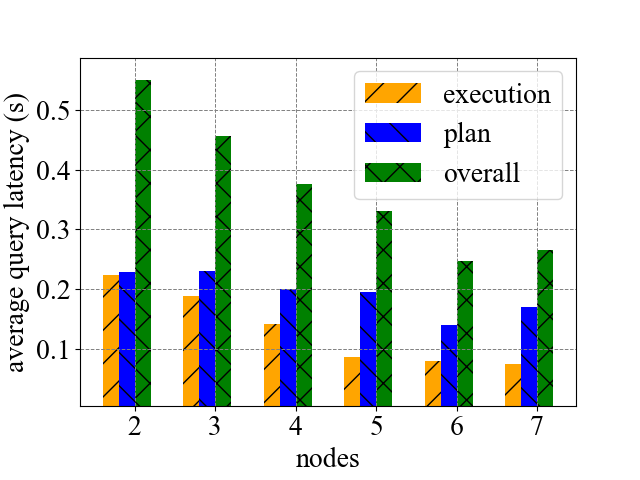}
  \end{minipage}
  }
   \subfigure[OAG 1Gbps]{
  \begin{minipage}{.22\textwidth}
  \centering
    \includegraphics[scale=0.28]{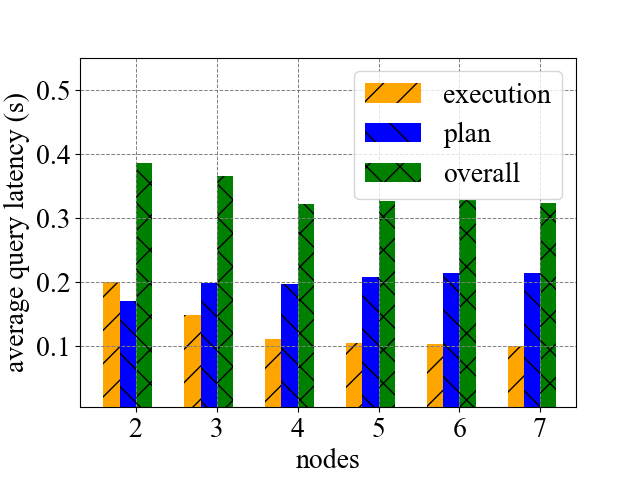}
  \end{minipage}
  }
  \subfigure[TPC-DS 1Gbps]{
  \begin{minipage}{.22\textwidth}
  \centering
    \includegraphics[scale=0.28]{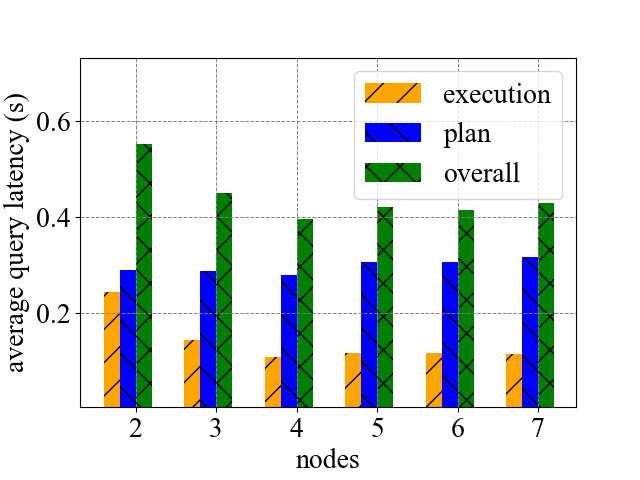}
  \end{minipage}
  }
  \caption{Overall query latency}
  \label{fig:overall performance} 
\end{figure*}

The overall latency of Minerva decreases first and then increases with the increase of $s$.  We builds data tables with sizes of 10MB, 100MB, and 500MB for testing. The overall query latency of Minerva gradually decreases with the increase of $s$. This is because as $s$ increases, DHT parsing operations gradually decrease, which greatly reduces the time required for top Hash parsing and provider resolution. However, when the chunk size is large, if the data table is small (10MB), the final query time will not always decrease, or even increase. Because query execution time dominates at this time, and the increase in $s$ also increases execution overhead and reduces the degree of parallelism, resulting in a decrease in overall query efficiency at this time.

We also test the relationship between the execution time of read/write queries and $s$ as shown in Fig.\ref{fig:param_b}. Read queries are defined in our experiment as queries other than 'CRAETE TABLE' and 'CREATE TEMPORARY TABLE'. Write query is the query that creates a new data table using the above statements. Write query of Apache drill does not modify existing data in place, but does "write" by creating a modified copy of the original data. Therefore, write queries are often based on a read query, and contain a read query execution process. To show the difference between two different queries, we selected data tables with a large amount of data (large than 500MB) to do the evaluation. As the Fig.\ref{fig:param_b} shows, write queries are 200ms slower than read queries, regardless of the nodes, which is the time spent by the write operation alone. In addition, writing a query requires generating a new table content identifier using the newly generated data chunks. The more data blocks there are, the more time it takes.

The result shows that too large or too small chunk size will have a negative impact on the performance. For read queries, we need to adjust the parallelism width to achieve the better performance and write queries may need fewer nodes to achieve the best.

\subsection{Overall Performance}
\label{ex_overall}
We store the data randomly and evenly in the nodes participating in the queries, and randomly select the Foreman node for data query. As shown in Fig.\ref{fig:overall performance}, Minerva can complete common requests in less than 400ms, as well as complex distributed query tasks in 600ms. This is efficient in the case of distributed big data query and limited network bandwidth. As the number of nodes increases, the query time should increase due to the more dispersed data distribution and the higher transmission overhead. Minerva also shows its scalability to nodes. When the number of nodes grow, it uses distributed computing to reduce latency, so that latency does not increase significantly or even decrease. 

As Fig.\ref{fig:overall performance} shows, with the increase of the nodes participating in parallel execution, the query execution time of both datasets decreases gradually at first, and then tends to be flat; When the parallel nodes keep increasing, the execution efficiency decreases. In general, the more nodes, the higher computing performance can be utilized by using more resources; However, the time spent on the entire query is also limited by the slowest node in the cluster. Therefore, increasing the number of nodes cannot reduce the query time indefinitely. Even in the case of more nodes (such as the 7 nodes in the figure), the increased communication and scheduling overhead will offset the benefits of parallel execution.

Compared to query execution time decreasing with the number of nodes, the query planning time increases with the nodes count. This is because the current query planning stage is only carried out at the Foreman node, and adding participating node will not enable the query planning stage to be completed in parallel; on the contrary, with the increase of nodes, the feasible nodes that Foreman nodes need to consider when planning queries increases. The network addresses of these nodes need to be resolved, and the connectivity of these nodes needs to be checked.

\begin{figure}
  \centering
   \subfigure[OAG dataset]{
  \begin{minipage}{.22\textwidth}
  \centering
    \includegraphics[scale=0.28]{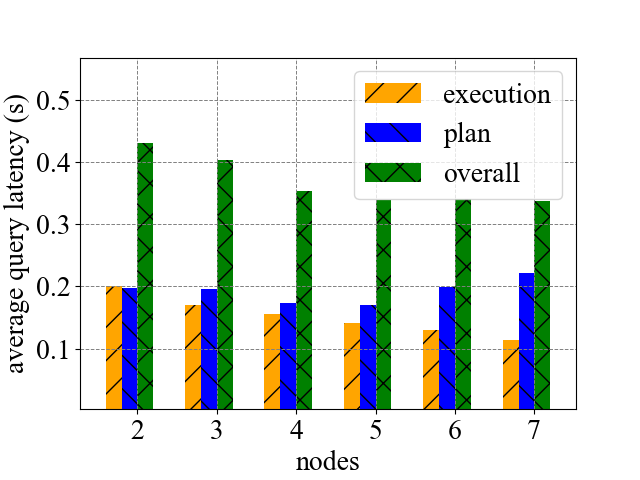}
  \end{minipage}
  }
  \subfigure[TPC-DS dataset]{
  \begin{minipage}{.22\textwidth}
  \centering
    \includegraphics[scale=0.28]{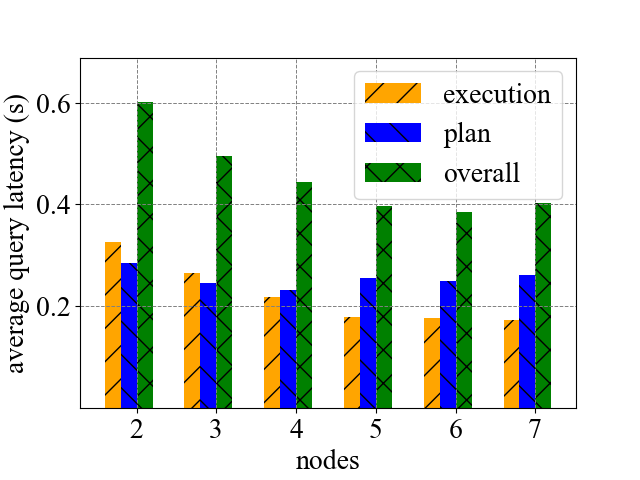}
  \end{minipage}
  }
  \label{fig:overall performance with 100 Mbps} 
  \caption{Overall query latency with 100 $Mbps$ network}
\end{figure}

Therefore, we find that since the plan time increases and the execution time decreases with the parallel nodes, the overall latency may decrease first and then increase. When perform query on a specific dataset, we should to select an appropriate parallel width to achieve the best query efficiency.

\section{Conclusion}
\label{sec:conclusion}
Data island problems hinder the development of data analysis and the application of big data. In this paper, we propose the first system to process federated data query based on IPFS. IPFS is a solution for data sharing and decentralized data storage, and Minerva's design provides a standard database style access paradigm for this data storage. In order to realize federated data query, we use Apache drill as the query engine, and design a query collaborator that adapts to the decentralized file hash table of IPFS. Moreover, Minerva also implements executors in various file forms and caches to meet query requirements. Extensively experiments show that Minerva performs well in cluster networks with 1Gbps bandwidth and wide area networks with 100Mbs bandwidth.

%We also analyze the impact of different query types and system parameters on Minerva. In the future 
%study, we will improve executor of Minerva and provide special support for large-scale data federation queries to make Minerva more robust.

\bibliographystyle{IEEEtran}
\bibliography{ref}{}

% Generated by IEEEtran.bst, version: 1.14 (2015/08/26)
\begin{thebibliography}{10}
\providecommand{\url}[1]{#1}
\csname url@samestyle\endcsname
\providecommand{\newblock}{\relax}
\providecommand{\bibinfo}[2]{#2}
\providecommand{\BIBentrySTDinterwordspacing}{\spaceskip=0pt\relax}
\providecommand{\BIBentryALTinterwordstretchfactor}{4}
\providecommand{\BIBentryALTinterwordspacing}{\spaceskip=\fontdimen2\font plus
\BIBentryALTinterwordstretchfactor\fontdimen3\font minus
  \fontdimen4\font\relax}
\providecommand{\BIBforeignlanguage}[2]{{%
\expandafter\ifx\csname l@#1\endcsname\relax
\typeout{** WARNING: IEEEtran.bst: No hyphenation pattern has been}%
\typeout{** loaded for the language `#1'. Using the pattern for}%
\typeout{** the default language instead.}%
\else
\language=\csname l@#1\endcsname
\fi
#2}}
\providecommand{\BIBdecl}{\relax}
\BIBdecl

\bibitem{10.14778/3514061.3514064}
\BIBentryALTinterwordspacing
Y.~Tong, X.~Pan, Y.~Zeng, Y.~Shi, C.~Xue, Z.~Zhou, X.~Zhang, L.~Chen, Y.~Xu,
  K.~Xu, and W.~Lv, ``Hu-fu: Efficient and secure spatial queries over data
  federation,'' \emph{Proc. VLDB Endow.}, vol.~15, no.~6, p. 1159–1172, jun
  2022. [Online]. Available: \url{https://doi.org/10.14778/3514061.3514064}
\BIBentrySTDinterwordspacing

\bibitem{zheng2010geolife}
Y.~Zheng, X.~Xie, and W.-Y. Ma, ``Geolife: A collaborative social networking
  service among user, location and trajectory,'' \emph{IEEE Data(base)
  Engineering Bulletin}, June 2010.

\bibitem{article}
A.~Bahmani, K.~Ferriter, V.~Krishnan, A.~Alavi, A.~Alavi, P.~Tsao, M.~Snyder,
  and C.~Pan, ``Swarm: A federated cloud framework for large-scale variant
  analysis,'' \emph{PLOS Computational Biology}, vol.~17, p. e1008977, 05 2021.

\bibitem{10.1145/3429252}
\BIBentryALTinterwordspacing
P.~Bellavista, L.~Foschini, and A.~Mora, ``Decentralised learning in federated
  deployment environments: A system-level survey,'' \emph{ACM Comput. Surv.},
  vol.~54, no.~1, feb 2021. [Online]. Available:
  \url{https://doi.org/10.1145/3429252}
\BIBentrySTDinterwordspacing

\bibitem{inproceedings}
R.~Sethi, M.~Traverso, D.~Sundstrom, D.~Phillips, W.~Xie, Y.~Sun, N.~Yegitbasi,
  H.~Jin, E.~Hwang, N.~Shingte, and C.~Berner, ``Presto: Sql on everything,''
  04 2019, pp. 1802--1813.

\bibitem{inproceedingspre}
Z.~Luo, L.~Niu, V.~Korukanti, Y.~Sun, M.~Basmanova, Y.~He, B.~Wang, D.~Agrawal,
  H.~Luo, C.~Tang, A.~Singh, Y.~Li, P.~Du, G.~Baliga, and M.~Fu, ``From batch
  processing to real time analytics: Running presto® at scale,'' 05 2022, pp.
  1598--1609.

\bibitem{Raghavan2019PlatformEF}
V.~Raghavan, A.~Denissov, F.~Guerrero, O.~Albertini, D.~Bhargov, L.~Owen,
  S.~Mani, and L.~Jain, ``Platform extension framework (pxf): Enabling
  parallelquery processing over heterogeneous data sources in greenplum,''
  2019.

\bibitem{inproceedingsSMC}
S.~Shukla, ``Secure multi-party computation (smc): A review,'' 12 2012.

\bibitem{DBLP:journals/corr/Benet14}
\BIBentryALTinterwordspacing
J.~Benet, ``{IPFS} - content addressed, versioned, {P2P} file system,''
  \emph{CoRR}, vol. abs/1407.3561, 2014. [Online]. Available:
  \url{http://arxiv.org/abs/1407.3561}
\BIBentrySTDinterwordspacing

\bibitem{articleDRILL}
M.~Hausenblas and J.~Nadeau, ``Apache drill: Interactive ad-hoc analysis at
  scale,'' \emph{Big Data}, vol.~1, pp. 100--104, 06 2013.

\bibitem{8187252}
S.~Babu and H.~Herodotou, 2013.

\bibitem{10.1145/1953122.1953148}
\BIBentryALTinterwordspacing
S.~Melnik, A.~Gubarev, J.~J. Long, G.~Romer, S.~Shivakumar, M.~Tolton, and
  T.~Vassilakis, ``Dremel: Interactive analysis of web-scale datasets,''
  \emph{Commun. ACM}, vol.~54, no.~6, p. 114–123, jun 2011. [Online].
  Available: \url{https://doi.org/10.1145/1953122.1953148}
\BIBentrySTDinterwordspacing

\bibitem{5496972}
K.~Shvachko, H.~Kuang, S.~Radia, and R.~Chansler, ``The hadoop distributed file
  system,'' in \emph{2010 IEEE 26th Symposium on Mass Storage Systems and
  Technologies (MSST)}, 2010, pp. 1--10.

\bibitem{articleMongo}
H.~Krishnan, M.~Elayidom, and T.~Santhanakrishnan, ``Mongodb – a comparison
  with nosql databases,'' \emph{International Journal of Scientific and
  Engineering Research}, vol.~7, pp. 1035--1037, 05 2016.

\bibitem{Kornacker2015ImpalaAM}
M.~Kornacker, A.~Behm, V.~Bittorf, T.~Bobrovytsky, C.~Ching, A.~Choi,
  J.~Erickson, M.~Grund, D.~Hecht, M.~Jacobs, I.~Joshi, L.~Kuff, D.~Kumar,
  A.~Leblang, N.~Li, I.~Pandis, H.~Robinson, D.~Rorke, S.~V. Rus, J.~Russell,
  D.~Tsirogiannis, S.~Wanderman-Milne, and M.~Yoder, ``Impala: A modern,
  open-source sql engine for hadoop,'' in \emph{CIDR}, 2015.

\bibitem{9684521}
E.~Daniel and F.~Tschorsch, ``Ipfs and friends: A qualitative comparison of
  next generation peer-to-peer data networks,'' \emph{IEEE Communications
  Surveys Tutorials}, vol.~24, no.~1, pp. 31--52, 2022.

\bibitem{10.5555/2541753}
H.~Zhang, Y.~Wen, H.~Xie, and N.~Yu, \emph{Distributed Hash Table: Theory,
  Platforms and Applications}.\hskip 1em plus 0.5em minus 0.4em\relax Springer
  Publishing Company, Incorporated, 2013.

\bibitem{10.1007/3-540-48184-2_32}
R.~C. Merkle, ``A digital signature based on a conventional encryption
  function,'' in \emph{Advances in Cryptology --- CRYPTO '87}, C.~Pomerance,
  Ed.\hskip 1em plus 0.5em minus 0.4em\relax Berlin, Heidelberg: Springer
  Berlin Heidelberg, 1988, pp. 369--378.

\bibitem{inproceedingsOAG}
A.~Sinha, Z.~Shen, Y.~Song, H.~Ma, D.~Eide, B.-J. Hsu, and K.~Wang, ``An
  overview of microsoft academic service (mas) and applications,'' 05 2015, pp.
  243--246.

\bibitem{inproceedingsTPC}
R.~Nambiar and M.~Poess, ``The making of tpc-ds,'' 01 2006, pp. 1049--1058.

\bibitem{10.5555/1855840.1855851}
P.~Hunt, M.~Konar, F.~P. Junqueira, and B.~Reed, ``Zookeeper: Wait-free
  coordination for internet-scale systems,'' in \emph{Proceedings of the 2010
  USENIX Conference on USENIX Annual Technical Conference}, ser.
  USENIXATC'10.\hskip 1em plus 0.5em minus 0.4em\relax USA: USENIX Association,
  2010, p.~11.

\bibitem{2008Faster}
M.~Steiner, D.~Carra, and E.~W. Biersack, ``Faster content access in kad,'' in
  \emph{International Conference on Peer-to-peer Computing}, 2008.

\bibitem{Cai_2013}
\BIBentryALTinterwordspacing
X.~S. Cai and L.~Devroye, ``A probabilistic analysis of kademlia networks,'' in
  \emph{Algorithms and Computation}.\hskip 1em plus 0.5em minus 0.4em\relax
  Springer Berlin Heidelberg, 2013, pp. 711--721. [Online]. Available:
  \url{https://doi.org/10.1007\%2F978-3-642-45030-3_66}
\BIBentrySTDinterwordspacing

\bibitem{22sigcommIPFS}
D.~Trautwein, A.~Raman, G.~Tyson, I.~Castro, W.~Scott, M.~Schubotz, B.~Gipp,
  and Y.~Psaras, ``Design and evaluation of ipfs: A storage layer for the
  decentralized web,'' 08 2022.

\bibitem{22IPFSAndFriends}
E.~Daniel and F.~Tschorsch, ``Ipfs and friends: A qualitative comparison of
  next generation peer-to-peer data networks,'' \emph{IEEE Communications
  Surveys \& Tutorials}, vol.~PP, pp. 1--1, 01 2022.

\bibitem{22DHT:journals/corr/abs-2104-09202}
\BIBentryALTinterwordspacing
L.~Balduf, S.~A. Henningsen, M.~Florian, S.~Rust, and B.~Scheuermann,
  ``Monitoring data requests in decentralized data storage systems: {A} case
  study of {IPFS},'' \emph{CoRR}, vol. abs/2104.09202, 2021. [Online].
  Available: \url{https://arxiv.org/abs/2104.09202}
\BIBentrySTDinterwordspacing

\bibitem{OriginProject}
E.~A. Inc., ``Pc games,'' \url{https://www.origin.com/hkg/zh-tw/}, 2023-07-04.

\bibitem{SteepshotProject}
Chainers, ``steepshot-mobile,''
  \url{https://github.com/Chainers/steepshot-mobile}, 2023-07-04.

\bibitem{MediachainProject}
``Mediachain,'' \url{http://www.mediachain.io/}, 2023-07-04.

\bibitem{OpenBazaarProject}
``Openbazaar,'' \url{https://github.com/OpenBazaar}, 2023-07-04.

\bibitem{FA2022}
D.~Wang, S.~Shi, Y.~Zhu, and Z.~Han, ``Federated analytics: Opportunities and
  challenges,'' \emph{IEEE Network}, vol.~36, no.~1, pp. 151--158, 2022.

\end{thebibliography}

\begin{IEEEbiography}[{\includegraphics[width=1in,height=1.25in,clip,keepaspectratio]{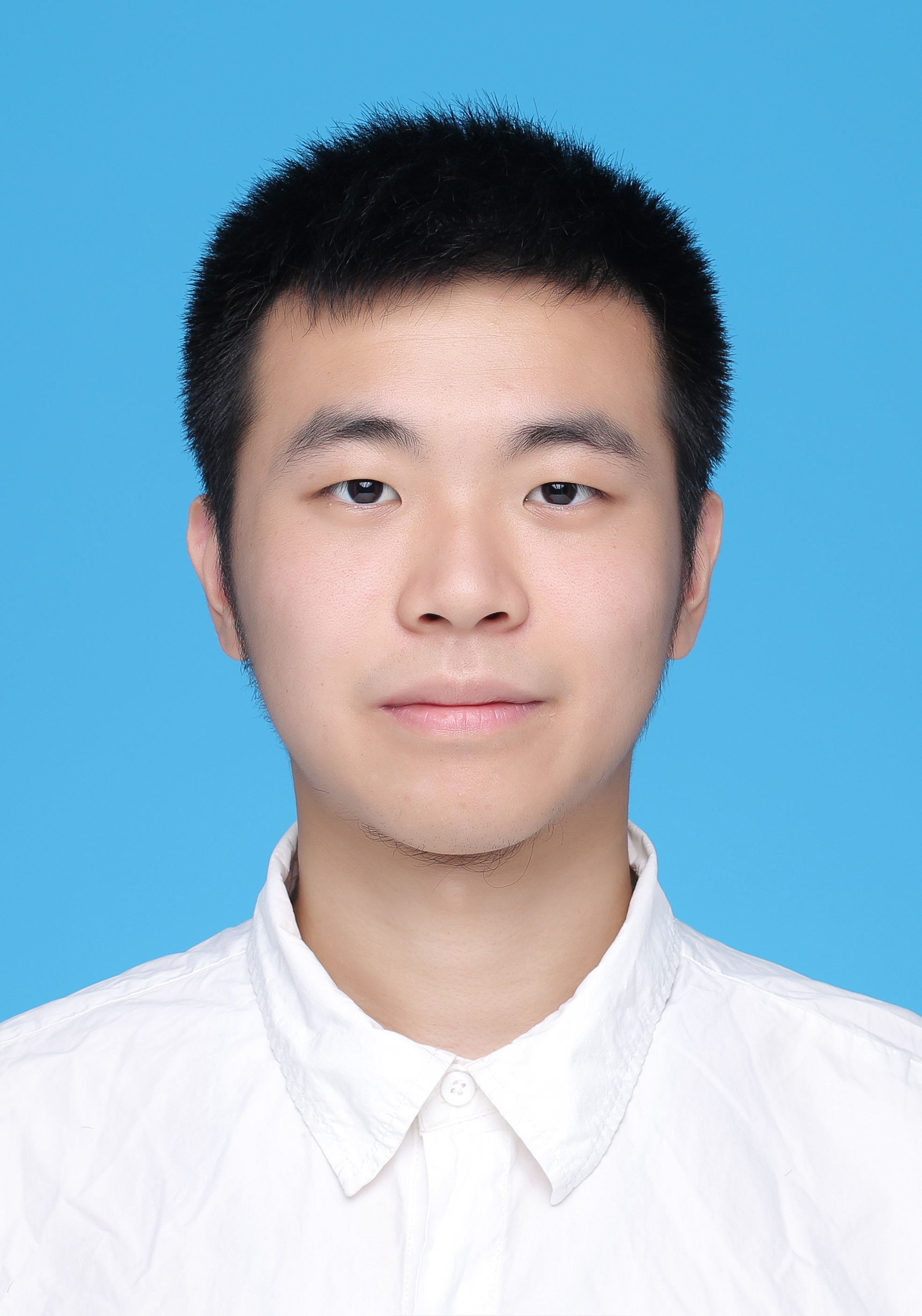}}]{Zhiyi Yao}
		is a Ph.D. candidate in electronic science and technology at School of Information Science and Technology, Fudan University, China. His research interests include distributed data analysis, deep learning training and inference system. 
	\end{IEEEbiography}

	\vspace{-1cm}

 \begin{IEEEbiography}[{\includegraphics[width=1in,height=1.25in,clip,keepaspectratio]{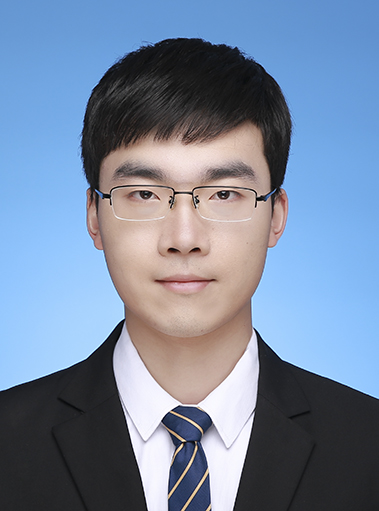}}]{Bowen Ding}
		was a graduate student at Fudan University. He is now a software engineer at Alluxio, Inc. His research insterests include distributed caching and storage systems.
	\end{IEEEbiography}

	\vspace{-1cm}

\begin{IEEEbiography}[{\includegraphics[width=1in,height=1.25in,clip,keepaspectratio]{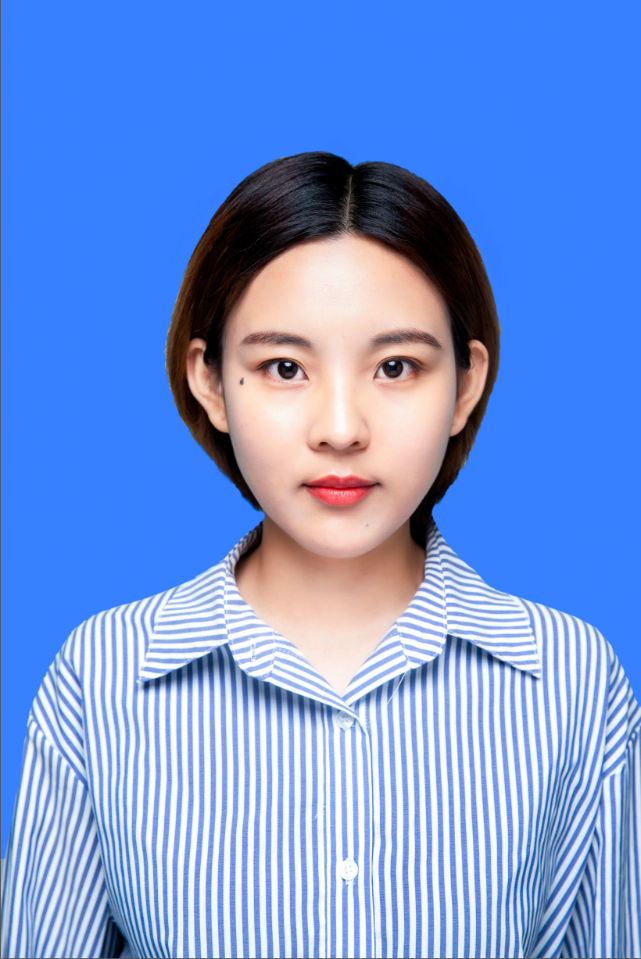}}]{Qianlan Bai}
		is a Ph.D. candidate in
		computer science at Shanghai Key Laboratory of Intelligent Information Processing, School of Computer Science, Fudan University, China. Her research interests include economic analysis and security analysis about blockchain. She has published a few papers in IEEE TIFS, TNSE, etc. 
	\end{IEEEbiography}

	\vspace{-1cm}

% \vspace{-1.5cm}
	\begin{IEEEbiography}[{\includegraphics[width=1in,height=1.25in,clip,keepaspectratio]{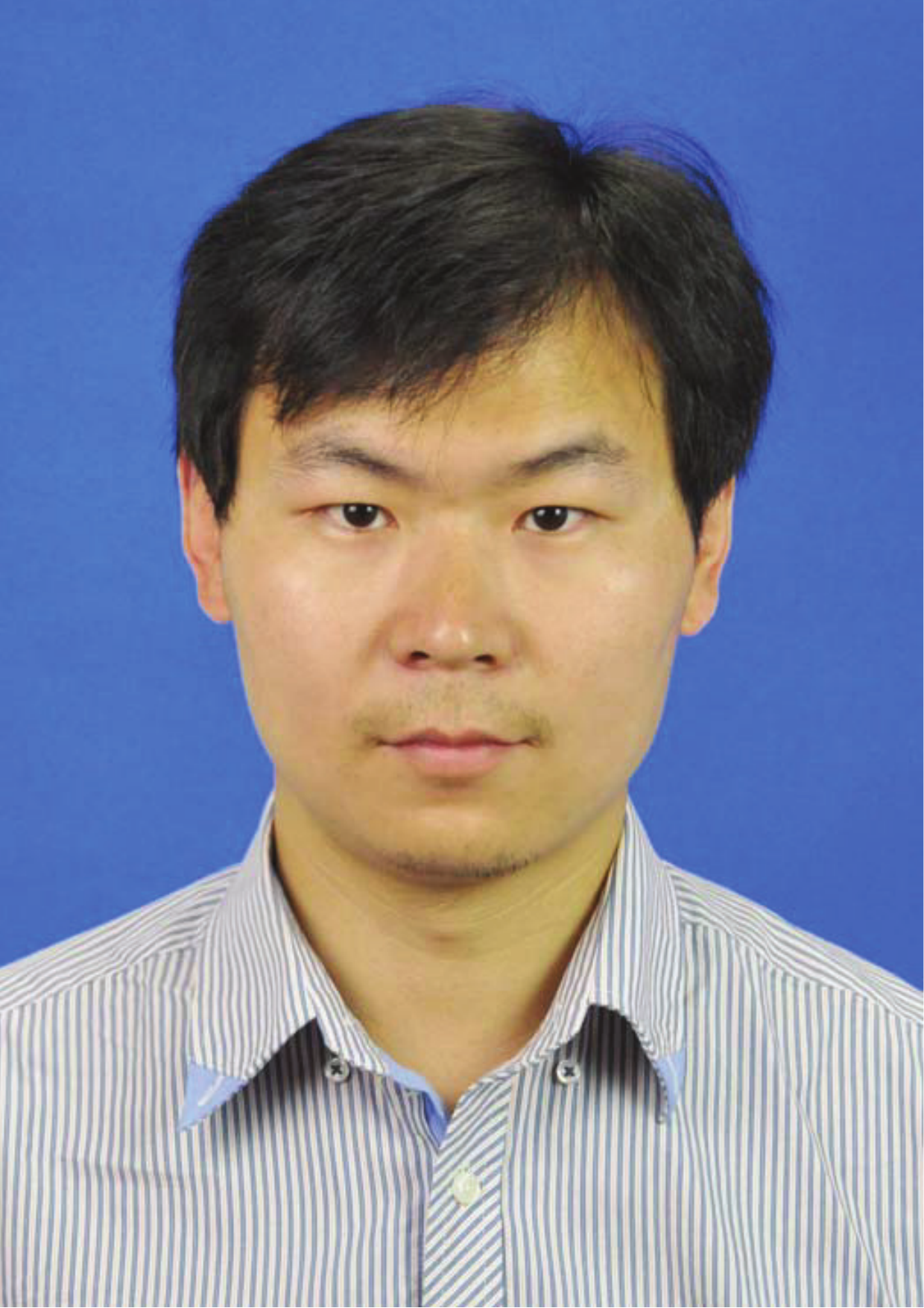}}]{Yuedong Xu}
		is a Professor with School of Information Science and Technology, Fudan University, China. He received B.S. degree from Anhui University in 2001, M.S. degree from the Huazhong University of Science and Technology in 2004, and Ph.D. degree from The Chinese University of Hong Kong in 2009. From 2009 to 2012, he was a Postdoctoral Researcher with INRIA Sophia Antipolis and Université d'Avignon, France. He received the French MENRT fellowship in 2009, the OKAWA Foundation research grant in 2019. His areas of interests include network performance evaluation, multimedia networking and distributed machine learning. He has published more than 30 papers in premier conferences and journals including ACM Mobisys, CoNEXT, Mobihoc, IEEE Infocom and IEEE/ACM ToN, IEEE JSAC. His areas of interest include performance evaluation, 
		optimization, data analytics and economic analysis of communication networks and distributed machine learning systems.

\end{IEEEbiography}

\end{document}